\newtheorem{Problem}{Problem}
\newtheorem{Algorithm}{Algorithm}
\newtheorem{Subproblem}{Subproblem}
\begin{document}
\newtheorem{theorem}{Theorem}
\newtheorem{acknowledgement}[theorem]{Acknowledgement}
\newtheorem{axiom}[theorem]{Axiom}
\newtheorem{case}[theorem]{Case}
\newtheorem{claim}[theorem]{Claim}
\newtheorem{conclusion}[theorem]{Conclusion}
\newtheorem{condition}[theorem]{Condition}
\newtheorem{conjecture}[theorem]{Conjecture}
\newtheorem{criterion}[theorem]{Criterion}
\newtheorem{definition}[theorem]{Definition}
\newtheorem{example}[theorem]{Example}
\newtheorem{exercise}[theorem]{Exercise}
\newtheorem{lemma}{Lemma}
\newtheorem{corollary}{Corollary}
\newtheorem{notation}[theorem]{Notation}
\newtheorem{problem}[theorem]{Problem}
\newtheorem{proposition}{Proposition}
\newtheorem{solution}[theorem]{Solution}
\newtheorem{summary}[theorem]{Summary}
\newtheorem{assumption}{Assumption}
\newtheorem{examp}{\bf Example}
\newtheorem{probform}{\bf Problem}
\def\remark{{\noindent \bf Remark:\hspace{0.5em}}}

\def\qed{$\Box$}
\def\QED{\mbox{\phantom{m}}\nolinebreak\hfill$\,\Box$}
\def\proof{\noindent{\emph{Proof:} }}
\def\poof{\noindent{\emph{Sketch of Proof:} }}
\def
\endproof{\hspace*{\fill}~\qed
\par
\endtrivlist\unskip}
\def\endproof{\hspace*{\fill}~\qed\par\endtrivlist\vskip3pt}

\def\E{\mathbb{E}}
\def\eps{\varepsilon}
\def\phi{\varphi}
\def\Lsp{{\boldsymbol L}}
\def\Bsp{{\boldsymbol B}}
\def\lsp{{\boldsymbol\ell}}
\def\Ltsp{{\Lsp^2}}
\def\Lpsp{{\Lsp^p}}
\def\Linsp{{\Lsp^{\infty}}}
\def\LtR{{\Lsp^2(\Rst)}}
\def\ltZ{{\lsp^2(\Zst)}}
\def\ltsp{{\lsp^2}}
\def\ltZt{{\lsp^2(\Zst^{2})}}
\def\ninN{{n{\in}\Nst}}
\def\oh{{\frac{1}{2}}}
\def\grass{{\cal G}}
\def\ord{{\cal O}}
\def\dist{{d_G}}
\def\conj#1{{\overline#1}}
\def\ntoinf{{n \rightarrow \infty }}
\def\toinf{{\rightarrow \infty }}
\def\tozero{{\rightarrow 0 }}
\def\trace{{\operatorname{trace}}}
\def\ord{{\cal O}}
\def\UU{{\cal U}}
\def\rank{{\operatorname{rank}}}
\def\acos{{\operatorname{acos}}}

\def\SINR{\mathsf{SINR}}
\def\SNR{\mathsf{SNR}}
\def\SIR{\mathsf{SIR}}
\def\tSIR{\widetilde{\mathsf{SIR}}}
\def\Ei{\mathsf{Ei}}
\def\l{\left}
\def\r{\right}
\def\({\left(}
\def\){\right)}
\def\lb{\left\{}
\def\rb{\right\}}

\setcounter{page}{1}

\newcommand{\eref}[1]{(\ref{#1})}
\newcommand{\fig}[1]{Fig.\ \ref{#1}}

\def\bydef{:=}
\def\ba{{\mathbf{a}}}
\def\bb{{\mathbf{b}}}
\def\bc{{\mathbf{c}}}
\def\bd{{\mathbf{d}}}
\def\bee{{\mathbf{e}}}
\def\bff{{\mathbf{f}}}
\def\bg{{\mathbf{g}}}
\def\bh{{\mathbf{h}}}
\def\bi{{\mathbf{i}}}
\def\bj{{\mathbf{j}}}
\def\bk{{\mathbf{k}}}
\def\bl{{\mathbf{l}}}
\def\bm{{\mathbf{m}}}
\def\bn{{\mathbf{n}}}
\def\bo{{\mathbf{o}}}
\def\bp{{\mathbf{p}}}
\def\bq{{\mathbf{q}}}
\def\br{{\mathbf{r}}}
\def\bs{{\mathbf{s}}}
\def\bt{{\mathbf{t}}}
\def\bu{{\mathbf{u}}}
\def\bv{{\mathbf{v}}}
\def\bw{{\mathbf{w}}}
\def\bx{{\mathbf{x}}}
\def\by{{\mathbf{y}}}
\def\bz{{\mathbf{z}}}
\def\b0{{\mathbf{0}}}

\def\bA{{\mathbf{A}}}
\def\bB{{\mathbf{B}}}
\def\bC{{\mathbf{C}}}
\def\bD{{\mathbf{D}}}
\def\bE{{\mathbf{E}}}
\def\bF{{\mathbf{F}}}
\def\bG{{\mathbf{G}}}
\def\bH{{\mathbf{H}}}
\def\bI{{\mathbf{I}}}
\def\bJ{{\mathbf{J}}}
\def\bK{{\mathbf{K}}}
\def\bL{{\mathbf{L}}}
\def\bM{{\mathbf{M}}}
\def\bN{{\mathbf{N}}}
\def\bO{{\mathbf{O}}}
\def\bP{{\mathbf{P}}}
\def\bQ{{\mathbf{Q}}}
\def\bR{{\mathbf{R}}}
\def\bS{{\mathbf{S}}}
\def\bT{{\mathbf{T}}}
\def\bU{{\mathbf{U}}}
\def\bV{{\mathbf{V}}}
\def\bW{{\mathbf{W}}}
\def\bX{{\mathbf{X}}}
\def\bY{{\mathbf{Y}}}
\def\bZ{{\mathbf{Z}}}

\def\mA{{\mathbb{A}}}
\def\mB{{\mathbb{B}}}
\def\mC{{\mathbb{C}}}
\def\mD{{\mathbb{D}}}
\def\mE{{\mathbb{E}}}
\def\mF{{\mathbb{F}}}
\def\mG{{\mathbb{G}}}
\def\mH{{\mathbb{H}}}
\def\mI{{\mathbb{I}}}
\def\mJ{{\mathbb{J}}}
\def\mK{{\mathbb{K}}}
\def\mL{{\mathbb{L}}}
\def\mM{{\mathbb{M}}}
\def\mN{{\mathbb{N}}}
\def\mO{{\mathbb{O}}}
\def\mP{{\mathbb{P}}}
\def\mQ{{\mathbb{Q}}}
\def\mR{{\mathbb{R}}}
\def\mS{{\mathbb{S}}}
\def\mT{{\mathbb{T}}}
\def\mU{{\mathbb{U}}}
\def\mV{{\mathbb{V}}}
\def\mW{{\mathbb{W}}}
\def\mX{{\mathbb{X}}}
\def\mY{{\mathbb{Y}}}
\def\mZ{{\mathbb{Z}}}

\def\cA{\mathcal{A}}
\def\cB{\mathcal{B}}
\def\cC{\mathcal{C}}
\def\cD{\mathcal{D}}
\def\cE{\mathcal{E}}
\def\cF{\mathcal{F}}
\def\cG{\mathcal{G}}
\def\cH{\mathcal{H}}
\def\cI{\mathcal{I}}
\def\cJ{\mathcal{J}}
\def\cK{\mathcal{K}}
\def\cL{\mathcal{L}}
\def\cM{\mathcal{M}}
\def\cN{\mathcal{N}}
\def\cO{\mathcal{O}}
\def\cP{\mathcal{P}}
\def\cQ{\mathcal{Q}}
\def\cR{\mathcal{R}}
\def\cS{\mathcal{S}}
\def\cT{\mathcal{T}}
\def\cU{\mathcal{U}}
\def\cV{\mathcal{V}}
\def\cW{\mathcal{W}}
\def\cX{\mathcal{X}}
\def\cY{\mathcal{Y}}
\def\cZ{\mathcal{Z}}
\def\cd{\mathcal{d}}
\def\Mt{M_{t}}
\def\Mr{M_{r}}
\def\O{\Omega_{M_{t}}}
\newcommand{\figref}[1]{{Fig.}~\ref{#1}}
\newcommand{\tabref}[1]{{Table}~\ref{#1}}

\newcommand{\var}{\mathsf{var}}
\newcommand{\fb}{\tx{fb}}
\newcommand{\nf}{\tx{nf}}
\newcommand{\BC}{\tx{(bc)}}
\newcommand{\MAC}{\tx{(mac)}}
\newcommand{\Pout}{P_{\mathsf{out}}}
\newcommand{\nnn}{\nn\\}
\newcommand{\FB}{\tx{FB}}
\newcommand{\TX}{\tx{TX}}
\newcommand{\RX}{\tx{RX}}
\renewcommand{\mod}{\tx{mod}}
\newcommand{\m}[1]{\mathbf{#1}}
\newcommand{\td}[1]{\tilde{#1}}
\newcommand{\sbf}[1]{\scriptsize{\textbf{#1}}}
\newcommand{\stxt}[1]{\scriptsize{\textrm{#1}}}
\newcommand{\suml}[2]{\sum\limits_{#1}^{#2}}
\newcommand{\sumlk}{\sum\limits_{k=0}^{K-1}}
\newcommand{\eqhsp}{\hspace{10 pt}}
\newcommand{\tx}[1]{\texttt{#1}}
\newcommand{\Hz}{\ \tx{Hz}}
\newcommand{\sinc}{\tx{sinc}}
\newcommand{\tr}{\mathrm{tr}}
\newcommand{\diag}{\mathrm{diag}}
\newcommand{\MAI}{\tx{MAI}}
\newcommand{\ISI}{\tx{ISI}}
\newcommand{\IBI}{\tx{IBI}}
\newcommand{\CN}{\tx{CN}}
\newcommand{\CP}{\tx{CP}}
\newcommand{\ZP}{\tx{ZP}}
\newcommand{\ZF}{\tx{ZF}}
\newcommand{\SP}{\tx{SP}}
\newcommand{\MMSE}{\tx{MMSE}}
\newcommand{\MINF}{\tx{MINF}}
\newcommand{\RC}{\tx{MP}}
\newcommand{\MBER}{\tx{MBER}}
\newcommand{\MSNR}{\tx{MSNR}}
\newcommand{\MCAP}{\tx{MCAP}}
\newcommand{\vol}{\tx{vol}}
\newcommand{\ah}{\hat{g}}
\newcommand{\tg}{\tilde{g}}
\newcommand{\teta}{\tilde{\eta}}
\newcommand{\heta}{\hat{\eta}}
\newcommand{\uh}{\m{\hat{s}}}
\newcommand{\eh}{\m{\hat{\eta}}}
\newcommand{\hv}{\m{h}}
\newcommand{\hh}{\m{\hat{h}}}
\newcommand{\Po}{P_{\mathrm{out}}}
\newcommand{\Poh}{\hat{P}_{\mathrm{out}}}
\newcommand{\Ph}{\hat{\gamma}}
\newcommand{\mat}[1]{\begin{matrix}#1\end{matrix}}
\newcommand{\ud}{^{\dagger}}
\newcommand{\C}{\mathcal{C}}
\newcommand{\nn}{\nonumber}
\newcommand{\nInf}{U\rightarrow \infty}

\title{Decentralized Fair Scheduling in Two-Hop Relay-Assisted Cognitive OFDMA Systems}

\author{Rui Wang$^\ast$,  Vincent K. N. Lau, and Ying Cui\\
Department of Electronic \& Computer Engineering\\
Hong Kong University of Science \& Technology\\
 Clear Water Bay, Hong Kong\\
Email:  ray.wang.rui@gmail.com, eeknlau@ust.hk, cuiying@ust.hk }

\maketitle

\begin{abstract}
In this paper, we consider a two-hop relay-assisted cognitive downlink OFDMA system (named as secondary system)
dynamically accessing a spectrum licensed to a primary network, thereby improving the efficiency of spectrum usage. A
cluster-based relay-assisted architecture is proposed for the secondary system, where relay stations are employed for
minimizing the interference to the users in the primary network and achieving fairness for cell-edge users. Based on
this architecture, an asymptotically optimal solution is derived for jointly controlling data rates, transmission
power, and subchannel allocation to optimize the average weighted sum goodput where the proportional fair scheduling (PFS)
is included as a special case. This solution supports decentralized implementation, requires small communication
overhead, and is robust against imperfect channel state information at the transmitter (CSIT) and sensing measurement.
The proposed solution achieves significant throughput gains and better user-fairness compared with the existing
designs. Finally, we derived a simple and asymptotically optimal scheduling solution as well as the associated
closed-form performance under the proportional fair scheduling for a large number of users. The system throughput is
shown to be $\mathcal{O}\left(N(1-q_p)(1-q_p^N)\ln\ln K_c\right)$, where $K_c$ is the number of users in one cluster,
$N$ is the number of subchannels and $q_p$ is the active probability of primary users.

\end{abstract}

{\em {\bf EDICS Items}: WIN-CLRD, WIN-CONT.}

\newpage

\section{Introduction}
Dynamic spectrum access \cite{Akyildiz:NextGenDynamSpectAccessCognitiveRadioSurvey:2006} is a new paradigm to meet the challenge
of the rapidly growing demands of broadband access and the spectrum scarcity for designing the next-generation wireless
communication systems. This motivates the study in this paper on designing a two-hop relay-assisted cognitive OFDMA system which
dynamically shares spectrum access with a primary system (PU) by exploiting its idle periods.

\subsection{Related Work and Motivation}

The issues of power control for dynamic spectrum access in ad hoc networks are addressed in
\cite{Huangjianwei:06,Sharma:07,QianchuanZhao:08}. Cellular systems using cognitive radio for dynamically accessing the
television spectrum are being standardized by the IEEE 802.22 working group. In \cite{Islam:08}, a joint beamforming and power
control algorithm is proposed for a cognitive cellular systems to mitigate interference to the primary network. A key obstacle
for implementing dynamic spectrum access in cellular systems is that direct transmission from base stations to cell-edge users
requires large power and thus causes strong interference to the users in the primary networks. As a result, the users in the
cell-edge will have very small access opportunity due to the primary user activities and this fairness issue cannot be solved by
simply fair scheduling at the base station because the users on the cell edge is limited by the channel access opportunity rather
than the scheduling opportunity. Hence, relay-assisted cellular system will be an effective solution for alleviating the above
fairness issue because it helps to reduce the transmission power required to reach the mobiles on the cell edge. However, there
are still a few critical issues associated with the design and operation of relay-assisted CR systems as summarized below.
\begin{itemize}
\item {\bf{\textcolor{black}{Optimal Decentralized Power, Rate and Subchannel Allocation Algorithm}}}: \textcolor{black}{Extensive research has been carried out on resource allocation in point-to-point relay-assisted communication systems. Power and subchannel allocations for relay-assisted OFDMA systems are studied in
\cite{WooseokNam:07,Oyman:07,GuoqingLi:06,NgChiuYam:07,Awad:08,Can:08,ZhihuaTang:09,Calcev:09,HongxingLi:09,Park:09,RS:10}.
However, these existing works consider centralized solution (e.g. at
BS) in which the resource (power, rate and subchannel) allocation of
the BS and the RSs is computed in a centralized manner at the BS
based on the global system state knowledge
\footnote{\textcolor{black}{Global system state refers to the
aggregate of the channel state information (CSI) of all the BS-RS
links, the RS-MS links, the BS-MS links as well as the sensing
measurements of the BS and all the relays.}}.  Hence, the
conventional centralized approach is very difficult to implement in
practice due to huge signaling overhead and computational
complexity. Moreover, various simplifying assumptions were made in
these literatures to simplify the resource allocation problem in
2-hop OFDMA systems at the cost of performance loss. For example,
one typical constraint is that the relay can only receive the data
for one MS in each subchannel, and this data should be forwarded
completely and exclusively to the target MS in one subchannel in
next phase \cite{Awad:08,ZhihuaTang:09}. This may cause significant
performance loss when the BS-RS link is much better than the RS-MS
link. Therefore, the challenge is to have a decentralized solution
\footnote{\textcolor{black}{By decentralized, we mean the resource
control actions at the BS and the $M$ RSs are computed locally at
the BS and each of the $M$ RS respectively based on the {\em local
system state} at each nodes. There are also explicit message passing
between the BS and the M RS nodes.  Local system state at the BS
refers to the CSI of the BS-mobile, BS-relay links and the sensing
measurement of the BS; local system state at the $ m $-th RS refers
to CSI of the $ m $-th RS to all its MSs and the sensing measurement
of the $ m $-th RS. Thus, the global system state is the aggregation
of local system states at BS and all the $ M $ relays.}} without
performance loss compared with the centralized solutions.}

\item {\bf{Fairness Consideration in Two Hop Systems}}: Conventional relay-assisted
cellular systems perform resource allocation to maximize the sum-throughput \cite{WooseokNam:07,Oyman:07}. Yet, fairness is an
important requirement and a general solution of fair scheduling in relay-assisted (two-hop) CR system is still not fully
addressed. When fairness is considered in a relay-assisted system, neither the optimization objective nor the flow balance
constraint for the relays is convex. Therefore, the conventional approaches for the sum-throughput optimization in the previous
works cannot be applied, and how to solve such resource allocation problem with fairness consideration in relay-assisted systems
is an important challenge to overcome.

\item {\bf{Dynamic Spectrum Access with Imperfect CSIT and Sensing
Measurement}}: In conventional resource optimization problems in
relay-assisted systems \cite{WooseokNam:07,Oyman:07}, there is no
consideration on dynamic spectrum sharing aspects. However, the
presence of PU activity and dynamic spectrum sharing has changed
the fundamental dynamics of the resource allocation problem. For
efficient spectrum sharing, it is critical for the CR systems to
be able to exploit the temporal and spatial burstiness of the PU
activity gaps and yet at the same time, without interrupting the
PU transmissions. This problem is even more challenging when we
have to take into account the imperfect  channel state information
and sensing measurement in which interference to the PU cannot be
completely avoided.

\end{itemize}

\subsection{Contributions}
The key contributions of our work are summarized as follows.
\textcolor{black}{We consider a cluster-based two-hop RS-assisted
cognitive OFDMA system, as shown in Figure \ref{Fig:model} and
\ref{Fig:frame}. We are interested in the associated resource
control problem, which is a difficult  non-convex problem. Moreover,
traditional centralized optimization approach requires significant
communication overhead between the base station and the relay
stations, and has exponentially many control variables w.r.t. the
number of independent subchannels.} In order to tackle these
difficulties, we divide and conquer the resource control problem
into a base station master problem and the relay station
subproblems, where the number of control variables is significantly
reduced (grows linearly w.r.t. the number of frequency bands). We
derive a low-complexity, low-overhead and decentralized algorithm
for controlling power, rate, and subchannel allocation, which
asymptotically maximizes the weighted sum goodput
(\textcolor{black}{average b/s/Hz {\em successfully received} by the
MS}) under the primary-user interference constraint.
\textcolor{black}{We also include the well-known proportional fair
scheduling (PFS) as a special case in our formulation.} The solution
accounts for multiuser diversity, user fairness, imperfect channel
state information at the transmitter (CSIT) and spectrum sensing. As
shown by simulations, the proposed resource allocation algorithm
significantly improves the fairness for cell-edge users. Finally, a
simple and asymptotically optimal scheduling policy as well as the
closed-form performance for PFS is derived to obtain design
insights. For instance, we show that the throughput of the proposed
two-hop relay-assisted cognitive OFDMA system under PFS is
$\mathcal{O}\left(N(1-q_p)(1-q_p^N)\ln\ln K_c\right)$, where $K_c$
is the number of users in one cluster, $N$ is the number of
independent subchannels and $q_p$ is the active probability of
primary users on one subchannel.

The remainder of this paper is organized as follows. The system model is described in Section~\ref{Section:SysModel}. In
Section~\ref{Section:ProbForm},  the problem of optimal power, rate, and subchannel allocation is formulated; the solutions are
presented in Section~\ref{Section:Solution}. Asymptotic throughput analysis is given in Section~\ref{Section:AsympAna}.
Section~\ref{Section:Sim} contains simulation results, followed by concluding remarks in Section~\ref{Section:Conclusion}.

\section{System Model}\label{Section:SysModel}

\subsection{Architecture and Protocol}
As illustrated in Figure \ref{Fig:model}, the secondary user (SU)
system is a cluster-based relay-assisted cognitive OFDMA downlink
system consists of one base station (BS) transmitting  to $K$ mobile
users (MS), where communications are assisted by $M$ relay stations
(RS) as elaborated shortly. The cell is divided into $M+1$ clusters
as shown in Figure~\ref{Fig:model}. The central cluster (served by
the BS) is indexed as the $0$-th cluster, whose users directly
communicate with the base station over relatively short distances.
Each of the remaining $M$ clusters is served by a half-duplexing
RS\footnote{In this architecture, the system design still has the
flexibility that each MS can be served by multiple RSs and BS: each
MS can be treated as multiple virtual MSs, each served by one RS. }.
Specifically, each RS forwards data packets from the base station to
users in the its cluster using the decode-and-forward (DaF)
strategy. The number of users in the $m$-th cluster is denoted as
$K_m$. For the notation convenience, we assume that the first $M$
users in the $0$-th cluster are the $M$ RSs, and the remaining
$K_0-M$ users in the $0$-th cluster are the MSs of the $0$-th
cluster ($K_0+K_1+...+K_M=K+M$).

The above secondary user (SU) system is assumed to opportunistically access a spectrum licensed to another network, whose users
are referred to as the \emph{primary users} (PU) and have the highest priority of using the spectrum. Primary users are
distributed over the service area of the SU system. To avoid interrupting the communication of primary users, every transmitter
(including the BS and the RSs) of the SU system is not allowed to transmit if there is active PU in the coverage.

The protocol for relay transmission is described as follows. The channels are assumed to be frequency selective and divided into
$N$ independent subchannels using the \emph{orthogonal frequency division multiplexing} (OFDM) modulation
\cite{GoldsmithBook:WirelessComm:05}. Downlink transmission is divided into frames, each with two phases (as illustrated in Figure \ref{Fig:frame}). In phase one, the base
station delivers packets to the MSs of the $0$-th cluster and all the RSs; in phase two, each RS forwards data packets to the MSs
in the corresponding cluster. To avoid interfering MSs in other clusters, we have the following assumption:
\begin{assumption}
\textcolor{black}{The base station does not deliver packets in phase
two. In order to control the inter-cluster interference between two
adjacent relay clusters, the transmitted signals at the adjacent RSs
are spread by different orthogonal spreading sequences in the
frequency domain as illustrated in Figure \ref{Fig:model}. }
\end{assumption}

\subsection{\textcolor{black}{Channel Model}}
\textcolor{black}{The channel realization is assumed to be
quasi-static over one frame but independent and identically
distributed (i.i.d.) across different frames. Channel gains are
characterized by the long-term path loss, shadowing and the
short-term fading. The symbol received at the $k$-th user of the
$m$-th cluster in the $n$-th subchannel, denoted as $Y_{m,n,k}$, can
be written as}
\begin{equation}
Y_{m,n,k} = \sqrt{p_{m,n,k}l_{m,k}} H_{m,n,k} X_{m,n,k} + Z_{m,n,k}, \nonumber
\end{equation}
where $X_{m,n,k}$ is the transmitted symbol, $p_{m,n,k}$ is the
transmission power, \textcolor{black}{$l_{m,k}$ is the long-term
channel attenuation due to path loss and shadowing, $H_{m,n,k}\sim
\mathcal{CN}(0,1)$ models short-term fading,} and $Z_{m,n,k} \sim
\mathcal{CN}(0,1)$ represents the additive white Gaussian noise.
Note that $H_{m,n,k}$ represents the channel between the $k$-th user
and the base station if $m=0$, or the $m$-th relay station if $m>0$.

The BS and RSs adapt the data rates, power, subchannel allocation
for the downlink transmission based on the CSI at the transmitter
(CSIT). We consider a time division duplex (TDD) system where the
CSIT can be acquired by channel reciprocal
\cite{MarzHoch:FastTransferCSI:2006}. Due to CSI estimation noise as
well as duplexing delay, the CSIT obtained will not be accurate and
the CSIT error model (based on MMSE prediction) is given by
\cite{MarzHoch:FastTransferCSI:2006}:
\begin{equation}
\hat{H}_{m,n,k} = H_{m,n,k} + \Delta H_{m,n,k},\quad \forall m,n,k
\end{equation}
where  $H_{m,n,k}$ represents actual CSI, $\Delta H_{m,n,k}$
represents the CSIT error which is modelled as complex Gaussian
distribution with mean 0 and variance $\sigma^2_e$ ($\Delta
H_{m,n,k}\sim\mathcal{CN}(0,\sigma^2_e)$), and $\mathbf{E}[\Delta
H_{m,n,k} \hat{H}_{m,n,k}]=0$ (meaning that the estimation error
$\Delta H_{m,n,k}$ is uncorrelated to CSIT $\hat{H}_{m,n,k}$). For
convenience, the CSIT is grouped according to cluster as the sets
$\mathbf{\hat{H}}_m = \cup_{n,k}\{\hat{H}_{m,n,k} \}$ for $0\leq m
\leq M$, which are referred to as \emph{local CSIT} at the $m$-th
cluster. The set $\mathbf{\hat{H}} = \bigcup\limits_{m=0}^M
\mathbf{\hat{H}}_m$ is called as \emph{global CSIT}.

\subsection{Dynamic Spectrum Access and Fairness Issues}

In each cluster, each secondary user senses the spectrum and
searches for subchannels unused by primary users, which, for
instance, may be wireless microphones or other Part 74 devices
\cite{SunminLim:07}. The spectrum sensing results consist of binary
indicators specifying the availability of subchannels. These sensing
resutls are referred to as \emph{raw sensing information} (RSI) in
this paper. Let $\hat{S}_{m,n,k}\in \{0,1\}$ denote the sensed state
at the $k$-th user on the $n$-th subchannel in the $m$-th cluster,
where $\hat{S}_{m,n,k} = 1$ and $0$ correspond to the states
``available" and ``unavailable", respectively. RSI
$\hat{\mathbf{S}}_m=\{\hat{S}_{m,n,k}|\forall n,k\}$ is communicated
by users to their corresponding servers (BS/RS) for enabling
resource allocation. Moreover, we also define the aggregation of RSI
from all clusters as $\hat{\mathbf{S}}=\{\hat{\mathbf{S}}_m |\forall
m\}$. Let $S_{m,n}$ be the actual primary-user state on the $n$-th
subchannel in the $m$-th cluster with $S_{m,n}=1$ denoting
subchannel is actually available and $S_{m,n}=0$ denoting otherwise,
$\mathbf{S}_m = \{S_{m,n}|\forall n\}$ be the actual PU activity of
all the subchannels in the $m$-th cluster
 and $\mathbf{S}=\{\mathbf{S}_m|\forall m\}$ be the aggregation of actual PU activity of all
clusters which is quasi-static over a number of frames\footnote{In practice, the PU activity changes over a longer time scale
compared with the CSI.}. Moreover, define $q_p = \Pr(S_{m,n} = 1)$ as the probability one subchannel is available, which is assumed
to be identical for all $m$ and $n$. In practice, we cannot have perfect sensing at the mobile and there exist nonzero
probabilities for the events \emph{false alarm} $\left(q_f = Pr(\hat{S}_{m,n,k} = 0|S_{m,n}=1)\right)$ and \emph{mis-detection}
$\left(q_m = Pr(\hat{S}_{m,n,k} = 1|S_{m,n}=0)\right)$ \cite{Zhao:SurveyDynamSpectAccess:2007}. Moreover, $q_d= 1-q_m$ represents
the probability of detection.

Due to the imperfect sensing measurement, it is not possible to
eliminate the interference from the SU to the PU systems. To protect
communication in the PU networks, we require
\begin{equation}
I_{m,n}=(\sum_{k=1}^{K_m}p_{m,n,k})\tau_{m,n}(1-\mathbf{E}[S_{m,n}|\widehat{\bf
S}_{m,n}])\leq \overline{I}, \quad \forall m,n,\label{eqn:inf-cont}
\end{equation}
where $I_{m,n}$ is the conditional average interference level (conditioned on the sensing measurement) from the SU (at
the $m$-th cluster and the $n$-th subchannel) to the active PU, $ \widehat{\bf S}_{m,n}=\{{\hat S}_{m,n,k} | k\in \{1
,K_m\}\} $, $p_{m,n,k}$ is the transmit power of the $m$-th RS (or BS) to its $k$-th MS in the $n$-th subchannel,
$\tau_{m,n}$ is the path loss between the SU transmitter (at the $m$-th cluster and the $n$-th subchannel) and the active
PU. Thus, each SU transmitter should guarantee that the average interference to the active PU in its cluster area is
not larger than one tolerance threshold $\overline{I}$.

{\bf Remarks (Fairness Issue with Cognitive OFDMA Systems without
RS):} \textcolor{black}{Consider a simple scenario where we have one
PU in each of the M RS clusters as well as the BS cluster as shown
in Figure \ref{Fig:model}. As a result, there are $M+1$ PUs in the
system.  Let $q_p$ be the probability that the PU in a cluster
becomes active in one subchannel. If there are no RS in the SU
system in Figure \ref{Fig:model}, the access opportunity of a
cell-edge user (users in the cluster $m>0$) in one subchannel is
$(1-q_p)^{M+1}$, which is the probability for all the $M+1$ PUs in
the BS's coverage area to be idle. Hence, the cell-edge users could
hardly access the spectrum even for moderate PU activity, leading to
critical fairness issue.}

\section{Joint Control of Rate, Power and Subchannel Allocation: Problem Formulation}\label{Section:ProbForm}

In this section, we shall formulate the rate, power and subchannel allocation design as an optimization problem. We first
formally define the optimization variables (control policies) as well as the optimization objectives below.

\subsection{Definitions of Control Policies}\label{Section:DefPolicy}
Consider  transmitting to the $k$-th user in the $0$-th cluster (the BS's cluster) over the $n$-th subchannel. The transmission power, rate
and percentage of subchannels the base station allocates to the user is denoted as $p_{0,n,k}(\hat{\mathbf{H}},
\hat{\mathbf{S}})$, $r_{0,n,k}(\hat{\mathbf{H}}, \hat{\mathbf{S}})$ and $\alpha_{0,n,k}(\hat{\mathbf{H}},
\hat{\mathbf{S}})$ respectively, which are adapted to the imperfect CSIT $\hat{\mathbf{H}}$ and RSI $\hat{\mathbf{S}}$.
The corresponding polices for controlling transmit power ($\mathcal{P}_0$), subchannel allocation ($\mathcal{A}_0$) and
transmit data rate ($\mathcal{R}_0$) are defined as the function sets
$\mathcal{P}_0:=\left\{p_{0,n,k}(\hat{\mathbf{H}}, \hat{\mathbf{S}})| \forall n,k\right\}$,
$\mathcal{A}_0=\left\{\alpha_{0,n,k}(\hat{\mathbf{H}},\hat{\mathbf{S}})| \forall n,k\right\}$, and
$\mathcal{R}_0=\left\{r_{0,n,k}(\hat{\mathbf{H}},\hat{\mathbf{S}})| \forall n,k\right\}$. These policies must satisfy a
set of constraints. Specifically, assuming the total transmission power at the base station is fixed at $P_0$,
\begin{equation}\label{Eq:Const1}
\text{Power constraint (BS):}  \quad \sum_{n=1}^N\sum_{k=1}^{K_0} p_{0,n,k}(\hat{\mathbf{H}}, \hat{\mathbf{S}})\leq
P_0.
\end{equation}
By definition, the percentages of subchannels allocated to different
users/relay-stations satisfy
\begin{equation}\label{Eq:Const2}
\text{Subchannel allocation constraint (BS):}\quad \sum_{k=1}^{K_0}
\alpha_{0,n,k}(\hat{\mathbf{H}},\hat{\mathbf{S}})\leq 1, \quad
\forall n\in[1,N].
\end{equation}
Furthermore, the data rates are adjusted under a constraint on the per-hop packet error probability\footnote{We assume
sufficiently strong coding, such as LDPC, is used so that the PER is dominated by the channel outage (transmit data
rate less than the instantaneous mutual information). This is reasonable as it has been shown \cite{Chung:2000} that
LDPC for reasonable block size (e.g. 8kbyte) could achieve the Shannon's limit to within 0.05dB. } $\Pout$, namely that
for given a per-hop PER constraint $0< \epsilon < 1$
\begin{equation}\label{Eq:Const3}
\text{Per-hop outage constraint (BS):}\quad \Pout(r_{0,n,k}, \hat{\mathbf{H}}) = \Pr (r_{0,n,k} >
\mathfrak{R}_{0,n,k}|\hat{\mathbf{H}}) = \epsilon, \quad \forall n\in[1,N],k\in[1,K_0],
\end{equation}
where $\mathfrak{R}_{0,n,k}$ is the maximum achievable data rate from the base station to $k$-th user in the $n$-th subchannel.

Each packet transmitting from the base station to a relay station is designed to contain information bits for users to
be served by this RS in the cluster. Let $d_{m,n,k}$ be the fraction of $k$-th user's information bits in a packet
transmitted over the $n$-th subchannel and received at the $m$-th relay station. It follows from the definition that
\begin{equation}
\text{Packet partition constraint (BS):}\quad     \sum_{k=1}^{K_m} d_{m,n,k}\leq 1, \ \forall m>0,n.
\end{equation}
 The base station is assumed to control $\{d_{m,n,k}\}$ based on the CSIT and RSI. The corresponding control policy for the
$m$-th RS is defined as $ \mathcal{D}_m:=\left\{d_{m,n,k}(\hat{\mathbf{H}},\hat{\mathbf{S}})|\forall n,k\right\}$.
Moreover, we also define the system packet partition policy as $\mathcal{D}= \bigcup\limits_{m=1}^M \mathcal{D}_m$.

The policies used by a relay station depend on the packet receiving status of the phase one transmission. Let $t_{n,m}\in
\{0,1\}$ denote the indicator of the decoding state of the $m$-th relay station on the $n$-th subchannel, where $t_{n,m}=1$ means
the corresponding packet is decoded successfully and $t_{n,m}=0$ means otherwise. Moreover, define the set
$\mathbf{T}_{m}=\{t_{n,m}|\forall n \in [1,N]\}$. Adding the newly defined sets as input, the policies for controlling power,
rate, and subchannel allocation at relay stations are defined similarly to those for the base station as $\mathcal{P}_m
:=\left\{p_{m,n,k}(\hat{\mathbf{H}},\hat{\mathbf{S}},\mathbf{T}_m) | \forall n,k\right\}$, $\mathcal{A}_m
:=\left\{\alpha_{m,n,k}(\hat{\mathbf{H}},\hat{\mathbf{S}},\mathbf{T}_m) | \forall n,k\right\}$, and
$\mathcal{R}_m=\left\{r_{m,n,k}(\hat{\mathbf{H}},\hat{\mathbf{S}},\mathbf{T}_m)| \forall n,k\right\}$.
 These policies must satisfy the following constraints
\begin{eqnarray}
\text{Power constraint (relay):}&&\sum_{n=1}^N\sum_{k=1}^{K_m}
p_{m,n,k}\leq P_m, \forall m \in [1,M]\label{Eq:Const1:r}
\end{eqnarray}
\begin{eqnarray}
\text{Subchannel allocation constraint (relay):}&& \sum_{k=1}^{K_m}\alpha_{m,n,k} \leq 1, \  \forall m \in [1,M],n \in [1,N]
\label{Eq:Const2:r}
\end{eqnarray}
\begin{eqnarray}
\text{Per-hop outage constraint (relay):}&&\Pout(\{r_{m,n,k}\}, \hat{\mathbf{H}}, \mathbf{T}_m) =
\Pr(r_{m,n,k}>\mathfrak{R}_{m,n,k}|\hat{\mathbf{H}}) =
\epsilon\label{Eq:Const3:r}
\end{eqnarray}
\textcolor{black}{
\begin{eqnarray}
\text{Flow balance constraint:}&&
 \sum_{n=1}^N r_{m,n,k} \leq  \sum_{n=1}^N
  d_{m,n,k}t_{n,m}r_{0,n,m}, \ \forall
  m\in[1,M],k \in [1,K_m].\label{Eq:FlowConst}
\end{eqnarray}}
where $\mathfrak{R}_{m,n,k}$ is the maximum achievable data rate from the $m$-th relay station to $k$-th user in the $n$-th
subchannel, the last constraint (\ref{Eq:FlowConst}) is because the total information bits transmitted by each RS cannot be more
than the information bits received from the BS.

\subsection{Average Weighted Goodput and Fairness}
The average weighted goodput is defined and used in the sequel as the metric for optimizing control policies discussed
in the preceding section. When the PU is not active at the $m$-th cluster and the $n$-th subchannel ($S_{m,n}=1$), the
instantaneous mutual information between the $m$-th transmitter and the $k$-th receiver in the $n$-th subchannel is given
by:
\begin{equation}
C_{m,n,k} =g_m \alpha_{m,n,k}\log_2\left(1+\frac{p_{m,n,k}l_{m,k}|H_{m,n,k}|^2}{\alpha_{m,n,k}}\right) \mbox{ if }
S_{m,n} = 1, \nonumber
\end{equation}
where $g_m \in \{0.25,0.5\}$ ($g_0=0.5$ and $g_m=0.25$ $\forall m>0$) is a constant indicating the spectrum efficiency. Due to
the half-duplex constraint at the base station, $g_m$ is equal to $0.5$ for $m=0$ (base station's cluster). Moreover, due to the
half-duplex constraint and the orthogonal spreading at the RSs, $g_m = 0.25$ for $m \geq 1$ (relay stations' clusters). On the
other hand, we have the following assumption on the interference from PU to SU:
\begin{assumption}
We assume the power of active PU is large, so that the SU transmission in one cluster will fail if there is any active PU in that
cluster using the same subchannel.
\end{assumption}
Hence, when $S_{m,n} = 0$ (PU active), there is large interference from the PU and the instantaneous mutual information can be
regarded as $C_{m,n,k} = 0$. As a result, the instantaneous mutual information can be written as:
\begin{equation}
C_{m,n,k} = \left\{
              \begin{array}{ll}
               \underbrace{g_m \alpha_{m,n,k}\log_2\left(1+\frac{p_{m,n,k}l_{m,k}|H_{m,n,k}|^2}{\alpha_{m,n,k}}\right)}_{\mbox{$\mathfrak{R}_{m,n,k}$}}, & \hbox{if $S_{m,n}=1$;} \\
               0, & \hbox{if $S_{m,n}=0$.}
              \end{array}
            \right. \nonumber
\end{equation}
Due to the imperfect CSIT knowledge, there is uncertainty in the instantaneous mutual information $C_{m,n,k}$ at the
transmitters and hence, there will be potential packet errors due to channel outage if the scheduled data rate exceeds
$C_{m,n,k}$. This packet error is {\em systematic} and cannot be alleviated by using strong error correction coding. As
a result, we shall consider goodput (b/s successfully delivered to the mobiles) as our performance measure.  The
instantaneous goodput over the $(m,n,k)$-th subchannel is defined as
\begin{eqnarray}
U_{m,n,k} &=& r_{m,n,k}\mathbf{I}(r_{m,n,k}\leq C_{m,n,k}) \nonumber\\
&=& r_{m,n,k} S_{m,n} \mathbf{I}(r_{m,n,k} \leq \mathfrak{R}_{m,n,k}) \nonumber,
\end{eqnarray}
where $\mathbf{I}(A)$ is the indicator function
with value $1$ when the event $A$ is true and $0$ otherwise.

Let $\{w_{m,k}\}$ be a set of goodput weights for different users (the weight for the $k$-th user in the $m$-th cluster is
$w_{m,k}$), whose values are set according to the users' QoS priorities. The average weighted goodput is given below:
\begin{eqnarray}
\overline{G}({\cal A}, {\cal P},  {\cal D})&:=& \mathbf{E}_{\mathbf{S},\hat{\mathbf{S}},\mathbf{H},\hat{\mathbf{H}}}\bigg[\sum_{n=1}^N\sum_{k=M+1}^{K_0}w_{0,k}U_{0,n,k} + \sum_{m=1}^M\sum_{n=1}^N\sum_{k=1}^{K_m}w_{m,k}U_{m,n,k}\bigg]\nonumber\\
&=&\mathbf{E}_{\hat{\mathbf{S}},\hat{\mathbf{H}}}\underbrace{\left\{\mathbf{E}_{\mathbf{S},\mathbf{H}}
\bigg[\sum_{n=1}^N\sum_{k=M+1}^{K_0} w_{0,k}U_{0,n,k}\mid \hat{\mathbf{S}},\hat{\mathbf{H}}\bigg]+ \sum_{m=1}^M
\mathbf{E}_{\mathbf{S},\mathbf{H}}\left[ \sum_{n=1}^N\sum_{k=1}^{K_m}w_{m,k}U_{m,n,k}\mid
\hat{\mathbf{S}},\hat{\mathbf{H}}\right]\right\}}_{\widetilde{G}(\mathbf{ A}, \mathbf{P},
\mathbf{D}|\hat{\mathbf{S}},\hat{\mathbf{H}})}\nonumber,
\end{eqnarray}
where $\widetilde{G}$ defined above is referred to as the conditional average system goodput (conditioned on
$\hat{\mathbf{S}},\hat{\mathbf{H}}$), ${\cal A}=\{{\cal A}_m | \forall m \in [0,M]\}$, ${\cal P}=\{{\cal P}_m| \forall
m \in [0,M]\}$ and ${\cal D}=\{{\cal D}_m | \forall m \in [1,M]\}$  are the subchannel allocation policy, power allocation
policy and packet partition policy of the system respectively, $\mathbf{A}$, $\mathbf{P}$ and $\mathbf{D}$ denote the
subchannel allocation action, power allocation action and packet partition action of the system respectively for a given
global CSIT $\hat{\mathbf{S}}$ and global RSI $\hat{\mathbf{H}}$ .

{\bf Remarks (Incorporating Fairness in the weighted Goodput):} Note that the optimization objective in the above equation
embraces fairness in the resource allocation. For instance, users with higher priorities could be allocated a larger weight
$w_{m,k}$. Furthermore, proportional fair scheduling (PFS), which is a commonly used fairness attribute, is also embraced by
setting $w_{m,k}(t) = \frac{1}{{\widetilde R}_{m,k}(t)}$, where $w_{m,k}(t)$ is the weight of the $k$-th users at the $m$-th
cluster and $t$-th frame and ${\widetilde R}_{m,k}(t)$ is the measured average throughput of this user. ${\widetilde R}_{m,k}(t)$
is updated on each frame according to ${\widetilde R}_{m,k}(t) = (1-\frac{1}{t_s}){\widetilde
R}_{m,k}(t-1)+\frac{1}{t_s}\sum_{n=1}^N r_{m,n,k}(t)$, where $t_s$ is the duration of one frame and $r_{m,n,k}(t)$ is the
scheduled data rate of the user in the $t$-th frame.

Notice that
\begin{eqnarray}
\mathbf{E}_{\mathbf{S},\mathbf{H}}[U_{0,n,k} | \hat{\mathbf{S}},\hat{\mathbf{H}}] &=& \beta_{0,n} r_{0,n,k}
(1-\Pr[r_{0,n,k}>\mathfrak{R}_{0,n,k}|\hat{\mathbf{H}}]) \nonumber \\
&=& \beta_{0,n} r_{0,n,k}
(1-\epsilon) \nonumber\\
\mathbf{E}_{\mathbf{S},\mathbf{H}}\left[ \sum_{n=1}^N\sum_{k=1}^{K_m}w_{m,k}U_{m,n,k}\mid
\hat{\mathbf{S}},\hat{\mathbf{H}}\right] &=& \mathbf{E}_{\mathbf{T}_m,\mathbf{S}_m,\mathbf{H}_m} \left[
\sum_{n=1}^N\sum_{k=1}^{K_m}w_{m,k}U_{m,n,k}\mid \hat{\mathbf{S}},\hat{\mathbf{H}}\right] \nonumber \\
&=& \mathbf{E}_{\mathbf{T}_m}  \left[ \sum_{n=1}^N\sum_{k=1}^{K_m} \mathbf{E}_{\mathbf{S}_m,\mathbf{H}_m}
[w_{m,k}U_{m,n,k}|\mathbf{T}_m]\mid \hat{\mathbf{S}},\hat{\mathbf{H}}\right] \nonumber \\
&=& \mathbf{E}_{\mathbf{T}_m}\bigg[ \sum_{n=1}^N\sum_{k=1}^{K_m}w_{m,k}\beta_{m,n}r_{m,n,k}\big(1-\Pr[r_{m,n,k}>
  \mathfrak{R}_{m,n,k}|\hat{\mathbf{H}}]\big)\bigg] \nonumber \\
&=& \mathbf{E}_{\mathbf{T}_m}\bigg[ \sum_{n=1}^N\sum_{k=1}^{K_m}w_{m,k}\beta_{m,n}r_{m,n,k}\big(1-\epsilon\big)\bigg]
\nonumber
\end{eqnarray}
where $\beta_{m,n} = \mathbf{E}[S_{m,n}|\hat{\mathbf{S}}]$ is the probability that the $n$-th subchannel in the $m$-th
cluster is available given the sensing feedbacks from the mobiles and
$\Pr[r_{m,n,k}>\mathfrak{R}_{m,n,k}|\hat{\mathbf{H}}]=\epsilon$ ($\forall m,n,k$) is conditional packet error
probability of one-hop link for given $\hat{\mathbf{H}}$, $\widetilde{G}$ can be written as
\begin{eqnarray}
\widetilde{G}(\mathbf{ A}, \mathbf{P},
\mathbf{D}|\hat{\mathbf{S}},\hat{\mathbf{H}})&=&\underbrace{\sum_{n=1}^N\sum_{k=M+1}^{K_0}w_{0,k}\beta_{0,n}r_{0,n,k}\big(1-\epsilon\big)}_{\widetilde{G}_0(\mathbf{A}_0,\mathbf{P}_0|\hat{\mathbf{S}},\hat{\mathbf{H}})}+
\sum_{m=1}^M \mathbf{E}_{\mathbf{T}_m}\bigg[
\underbrace{\sum_{n=1}^N\sum_{k=1}^{K_m}w_{m,k}\beta_{m,n}r_{m,n,k}\big(1-\epsilon\big)}_{\widetilde{G}_m(\mathbf{A}_0,\mathbf{P}_0,\mathbf{A}_m,\mathbf{D}_m,\mathbf{P}_m|\hat{\mathbf{S}}_m,\hat{\mathbf{H}}_m,\mathbf{T}_{m})}\bigg],\nonumber
\end{eqnarray}
or $$\widetilde{G}(\mathbf{ A}, \mathbf{P}, \mathbf{D}|\hat{\mathbf{S}},\hat{\mathbf{H}})=
\widetilde{G}_0(\mathbf{A}_0,\mathbf{P}_0|\hat{\mathbf{S}},\hat{\mathbf{H}}) + \sum_{m=1}^M \mathbf{E}_{\mathbf{T}_m}
\widetilde{G}_m(\mathbf{A}_0,\mathbf{P}_0,\mathbf{A}_m,\mathbf{D}_m,\mathbf{P}_m|\hat{\mathbf{S}}_m,\hat{\mathbf{H}}_m,\mathbf{T}_{m}),$$
where $\mathbf{A}_m=\{\alpha_{m,n,k}|\forall n,k\}$, $\mathbf{P}_m=\{p_{m,n,k}|\forall m,n,k\}$,
$\mathbf{D}_m=\{d_{m,n,k}|\forall n,k\}$ and $\mathbf{D} = \{\mathbf{D}_m|\forall m\}$, $\mathbf{A} =
\{\mathbf{A}_m|\forall m\}$, $\mathbf{P} = \{\mathbf{P}_m|\forall m\}$.

\subsection{Problem Formulation}
Since a policy consists of a set of actions for each realization
of CSIT and RSI, finding the optimal policy is equivalent to the
following problem.
\begin{Problem}For each given CSIT $\hat{\mathbf{H}}$ and RSI $\hat{\mathbf{S}}$ realization, we have:
\begin{eqnarray}
&&\{\mathbf{A}^*(\hat{\mathbf{H}},\hat{\mathbf{S}}),
\mathbf{P}^*(\hat{\mathbf{H}},\hat{\mathbf{S}}), \mathbf{D}^*(\hat{\mathbf{H}},\hat{\mathbf{S}})\}\nonumber\\
&=&\max_{\mathbf{A}_0,\mathbf{P}_0,\mathbf{D}}\bigg\{\widetilde{G}_0(\mathbf{A}_0,\mathbf{P}_0|\hat{\mathbf{S}},\hat{\mathbf{H}})
  +\sum_{m=1}^M \mathbf{E}_{\mathbf{T}_{m}}\left[ \underbrace{\max_{\mathbf{A}_m,\mathbf{P}_m} \widetilde{G}_m(\mathbf{A}_0,\mathbf{P}_0,\mathbf{A}_m,\mathbf{D}_m,\mathbf{P}_m|\hat{\mathbf{S}}_m,\hat{\mathbf{H}}_m,\mathbf{T}_{m})}_{\mbox{Local Optimization on } \widetilde{G}_m}\right]\bigg\}\nn\\
&s.t.& \text{the constraints in \eqref{eqn:inf-cont}-\eqref{Eq:FlowConst},}\nonumber
\end{eqnarray}
\label{prob:main}
\end{Problem}

{\bf Remarks (Comparison with Traditional Resource Allocation Problem in OFDMA Systems):} Noting that neither the
objective function nor the constraint \eqref{Eq:FlowConst} is convex, the traditional optimization approaches in
\cite{WooseokNam:07,Oyman:07} cannot be applied in this problem as the duality gap is not zero. Moreover, due to the
potential packet error at the BS-RS link, the traditional centralized controller needs to solve $\mathcal{O}(M2^N)$
control variables for all possible $\mathbf{T}_m$ realization in Problem \ref{prob:main} (RS's control actions are the
function of $\mathbf{T}_m$). Thus, the brute force solution for Problem \ref{prob:main} involves unacceptable
computational complexity and huge communication overhead between the BS and the RSs. In this paper, we shall show how
to {\em divide and conquer} this non-convex optimization problem into the optimization problem at the BS and RSs. By
appropriate design of {\em backward recursion} and {\em online strategy}, the system only need to solve
$\mathcal{O}(MN)$ control variables. Furthermore, the algorithm can be implemented distributively in the system and the
communication overhead between the BS and RSs is very small.

\textcolor{black}{In Problem \ref{prob:main}, the local optimization
on $ \widetilde{G}_m $  with respect to $ \mathbf{A}_m $ and $
\mathbf{P}_m $ ($ \max\limits_{\mathbf{A}_m,\mathbf{P}_m}
\widetilde{G}_m (\cdot) $) is subject to the constraints
\eqref{eqn:inf-cont} ($m>0$) and
\eqref{Eq:Const1:r}-\eqref{Eq:FlowConst}. As a result, for a given
Phase-I receiving status $ \{r_{0,n,m}t_{n,m}\} $ and the packet
partitioning $\{d_{m,n,k}\}$, these local optimizations on
$\max\limits_{\mathbf{A}_m,\mathbf{P}_m} \widetilde{G}_m(\cdot)$ can
be done locally at the m-th RS for $m\in{1,..,M}$. Therefore, using
standard argument of primal decomposition \cite{Palomar:07}, solving
Problem \ref{prob:main} is equivalent to solving the following two
subproblems:}

\begin{Subproblem}[Optimization at $m$-th RS]
\begin{eqnarray}
&&\widetilde{G}^*_m(\{r_{0,n,m}t_{n,m}\},\{d_{m,n,k}\}|\hat{\mathbf{S}}_m,\hat{\mathbf{H}}_m)=\max_{\mathbf{A}_m,\mathbf{P}_m} \widetilde{G}_m
(\mathbf{A}_m,\mathbf{P}_m,\{r_{0,n,m}t_{n,m}\},\{d_{m,n,k}\}|\hat{\mathbf{S}}_m,\hat{\mathbf{H}}_m)\nonumber\\
&=&\max_{\mathbf{A}_m,\mathbf{P}_m}
  \sum_{n=1}^N\sum_{k=1}^{K_m}\frac{1}{4}(1-\epsilon)w_{m,k}\beta_{m,n}\alpha_{m,n,k}\log_2(1+\frac{p_{m,n,k}}{\alpha_{m,n,k}}\phi_{m,n,k})\nonumber\\
&s.t.& \text{the constraints in \eqref{eqn:inf-cont} ($m>0$), \eqref{Eq:Const1:r}-\eqref{Eq:FlowConst}.}\nonumber
\end{eqnarray}
where $\phi_{m,n,k}=\frac{1}{2}l_{m,n}\sigma_e^2F_{|\hat{H}_{m,n,k}|^{2}/\frac{1}{2}\sigma_e^2}^{-1}(\epsilon)$ which
is obtained from the outage probability constraint, and $F_{|\hat{H}_{m,n,k}|^{2}/\frac{1}{2}\sigma_e^2}^{-1}(\cdot)$
denotes the inverse cdf of non-central chi-square random variable with 2 degrees of freedom and non-centrality
parameter $|\hat{H}_{m,n,k}|^{2}/\frac{1}{2}\sigma_e^2$.\label{sp:rs}
\end{Subproblem}
\begin{Subproblem}[Optimization at the BS]
\begin{eqnarray}
&\max\limits_{\mathbf{A}_0,\mathbf{P}_0,\mathbf{D}}
&\widetilde{G}_0(\mathbf{A}_0,\mathbf{P}_0|\hat{\mathbf{S}},\hat{\mathbf{H}})
  +\sum_{m=1}^M \mathbf{E}_{\mathbf{T}_{m}} \widetilde{G}^*_m(\{r_{0,n,m}t_{n,m}\},\{d_{m,n,k}\}|\hat{\mathbf{S}}_m,\hat{\mathbf{H}}_m)\nonumber\\
=&\max\limits_{\mathbf{A}_0,\mathbf{P}_0,\mathbf{D}}
&\sum_{n=1}^N\sum_{k=M+1}^{K_0}\frac{1}{2}(1-\epsilon)w_{0,k}\beta_{0,n}\alpha_{0,n,k}\log_2(1+\frac{p_{0,n,k}}{\alpha_{0,n,k}}\phi_{0,n,k})\nonumber\\
& &+\sum_{m=1}^M \mathbf{E}_{\mathbf{T}_{m}}
\widetilde{G}^*_m(\{r_{0,n,m}t_{n,m}\},\{d_{m,n,k}\}|\hat{\mathbf{S}}_m,\hat{\mathbf{H}}_m)\nonumber\\
=&\max\limits_{\mathbf{A}_0,\mathbf{P}_0}
&\sum_{n=1}^N\sum_{k=M+1}^{K_0}\frac{1}{2}(1-\epsilon)w_{0,k}\beta_{0,n}\alpha_{0,n,k}\log_2(1+\frac{p_{0,n,k}}{\alpha_{0,n,k}}\phi_{0,n,k})\nonumber\\
& &+\sum_{m=1}^M \mathbf{E}_{\mathbf{T}_{m}}
\widetilde{G}^{**}_m(\sum_n r_{0,n,m}t_{n,m}|\hat{\mathbf{S}}_m,\hat{\mathbf{H}}_m)\label{eq:CAWSG}\\
&s.t.& \text{the constraints in \eqref{eqn:inf-cont} ($m=0$),  and \eqref{Eq:Const1}-\eqref{Eq:Const3},}\nonumber
\end{eqnarray}\label{sp:bs}
where
\begin{equation}
\widetilde{G}^{**}_m(\sum_n r_{0,n,m}t_{n,m}|\hat{\mathbf{S}}_m,\hat{\mathbf{H}}_m) = \max_{\{d_{m,n,k}|\forall k\}}
  \widetilde{G}^*_m(\{r_{0,n,m}t_{n,m}\},\{d_{m,n,k}\}|\hat{\mathbf{S}}_m,\hat{\mathbf{H}}_m) \nonumber.
\end{equation}
\end{Subproblem}

The divide and conquer procedure to solve Problem \ref{prob:main} is given below:
\begin{itemize}
\item {\bf Backward Recursion}: At the beginning of each frame, after channel estimation and sensing,
each RS calculates and feedbacks the function $\tilde{G}_m^{**}(r|\hat{\mathbf{S}}_m,\hat{\mathbf{H}}_m)$ to the BS.

\item {\bf Online Strategy}: In phase one, the BS solves the Subproblem \ref{sp:bs} and delivers packets accordingly. In
phase two, each RS (say the $m$-th RS) solves its Subproblem \ref{sp:rs} according to the packet receiving status in
phase one $\{r_{0,n,m}t_{n,m}|\forall n=1,2,...,N\}$, and delivers packets accordingly.
\end{itemize}

\section{Joint Control of Rate, Power and Subchannel Allocation: Solutions}\label{Section:Solution}

In this section, We shall derive a low-complexity solution for the general weighted goodput optimization. The solution supports
decentralized implementation which significantly reduce computational complexity and signaling loading. Furthermore, the solution
is asymptotically optimal when the number of users is sufficiently large and the BS-RS links are sufficiently good. We shall also derive the solution for PFS as a special case.

\subsection{Asymptotically Optimal Algorithm}

\textbf{Solution of Subproblem 1:} The Subproblem 1 can be solved
by using the  duality approach \cite{BoydBook}. Specifically, the
Lagrangian is given as
\begin{eqnarray}
L_m&=&\widetilde{G}_m-\sum_{n=1}^N\lambda_n\left(\sum_{k=1}^{K_m}
\alpha_{m,n,k} -1\right) -\nu\left(\sum_{n=1}^N
\sum_{k=1}^{K_m} p_{m,n,k} - P_m\right)\nonumber\\
& &-\sum_{n=1}^N \eta_n\left(\sum_{k=1}^{K_m} (1-\beta_{m,n}) \tau^2_{m,n} p_{m,n,k} -
\overline{I}\right)-\sum_{k=1}^{K_m}\mu_k\left(\sum_{n=1}^{N}\frac{\alpha_{m,n,k}}{4}\log_2(1+\frac{p_{m,n,k}\phi_{m,n,k}}{\alpha_{m,n,k}})-R_{m,k}\right)\nonumber
\end{eqnarray}
where $R_{m,k}=\sum_{n=1}^N t_{n,m} d_{m,n,k} r_{0,n,m}$ is constant in this subproblem. Hence, the dual problem is:
\begin{Subproblem}[Dual Problem of Subproblem 1]
\begin{eqnarray}
&\min\limits_{\vec{\lambda},\vec{\eta},\vec{\mu},\nu}&
\max_{\mathbf{A}_m,\mathbf{P}_m}L_m(\vec{\lambda},\vec{\eta},\vec{\mu},\nu)
\nonumber\\
&s.t.& \vec{\lambda},\vec{\eta},\vec{\mu},\nu \succeq 0, \nonumber
\end{eqnarray}
where $\mathbf{A}\succeq0$ means each element of vector
$\mathbf{A}$ is nonnegative.
\end{Subproblem}
The algorithm to solve the above dual problem is presented in
Appendix A. Note that the Subproblem 1 is a non-convex
optimization problem because the optimization constraint is
non-convex. Nevertheless, since the problem satisfies the property
of ``time sharing" as introduced in \cite{WeiYu:06}, the duality gap
of the above problem is zero, and hence, solving the above dual
problem will lead to the optimal solution of Subproblem 1.

\textbf{Solution of Subproblem \ref{sp:bs}:} The expectation on the binary vector $\mathbf{T}_m$ in Subproblem
\ref{sp:bs} should take over exponential order (w.r.t. the number of subchannels $N$) of possible situations, which raises
unacceptable computational complexity. In the following lemma, we show that the expectation over the binary vector
$\mathbf{T}_m$ can be decoupled into each subchannel asymptotically, therefore, the computational complexity become
linear.
\begin{lemma}[Asymptotically Equivalent Objective]
When the channels between the BS and the RSs are sufficiently good, one relay is scheduled at most on one subchannel.
Hence, (\ref{eq:CAWSG}) can be written as
\begin{eqnarray}
&&\max\limits_{\mathbf{A}_0,\mathbf{P}_0}
  \widetilde{G}_0(\mathbf{A}_0,\mathbf{P}_0|\hat{\mathbf{S}}_m,\hat{\mathbf{H}}_m)
+\sum_{m=1}^M  \sum_{n=1}^N (1-\epsilon)\beta_{0,n}
\widetilde{G}_{m}^{**}(r_{0,n,m}|\hat{\mathbf{S}}_m,\hat{\mathbf{H}}_m)\nonumber.
\end{eqnarray}\label{lem:obj}
\end{lemma}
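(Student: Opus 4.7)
The lemma bundles two claims: (a) a structural claim that each RS is scheduled on at most one subchannel when the BS-RS links are sufficiently good, and (b) the algebraic decoupling of $\mathbf{E}_{\mathbf{T}_m}[\cdot]$ across subchannels that (a) enables. My plan is to prove (a) first, after which (b) reduces to a direct computation.

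For (a) I would first study $\widetilde{G}_m^{**}(R|\hat{\mathbf{S}}_m,\hat{\mathbf{H}}_m)$ as a function of the scalar $R$ representing the total bits delivered from the BS to RS $m$. Three properties drive the argument: (i) $\widetilde{G}_m^{**}(0|\cdot)=0$, immediate from the flow-balance constraint \eqref{Eq:FlowConst}; (ii) $\widetilde{G}_m^{**}(R|\cdot)$ is non-decreasing and concave in $R$, since in Subproblem~\ref{sp:rs} the only $R$-dependent constraint \eqref{Eq:FlowConst} is linear in $R$ and the zero duality-gap property of the RS problem (time-sharing) makes the standard LP-value concavity argument applicable; and (iii) $\widetilde{G}_m^{**}(R|\cdot)$ saturates at a finite value $\widetilde{G}_{m,\max}^{**}$ once $R$ exceeds a threshold $R_m^{\star}$ determined by the RS's own power, bandwidth and PU-interference budgets. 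I would then run an exchange argument. Suppose at the optimum RS $m$ is served on two distinct subchannels $n_1$ and $n_2$ with $r_{0,n_i,m}>0$. In the ``sufficiently good BS-RS'' regime each $r_{0,n_i,m}$ may be taken to exceed $R_m^{\star}$, so that
\begin{equation}
\widetilde{G}_m^{**}\!\left(t_{n_1,m}r_{0,n_1,m}+t_{n_2,m}r_{0,n_2,m}\big|\cdot\right)=\widetilde{G}_{m,\max}^{**}\,\mathbf{I}(t_{n_1,m}+t_{n_2,m}\geq 1).\nonumber
\end{equation}
The marginal value of keeping subchannel $n_2$ on RS $m$ is then $\beta_{0,n_2}(1-\epsilon)(1-\beta_{0,n_1}(1-\epsilon))\widetilde{G}_{m,\max}^{**}$, whereas redirecting that subchannel to any unsaturated recipient (another RS, or a BS-cluster MS) contributes $\beta_{0,n_2}(1-\epsilon)$ times a per-subchannel weighted goodput that strictly dominates the former in the asymptotic regime, contradicting optimality.

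Claim (b) follows directly from (a). Let $n_m$ be the unique subchannel (if any) on which RS $m$ is served; then $r_{0,n,m}=0$ for $n\neq n_m$ and $\sum_n r_{0,n,m}t_{n,m}=r_{0,n_m,m}t_{n_m,m}$. Because $t_{n_m,m}$ is Bernoulli with $\Pr[t_{n_m,m}=1|\hat{\mathbf{H}},\hat{\mathbf{S}}]=\beta_{0,n_m}(1-\epsilon)$ (joint PU-freeness and non-outage) and $\widetilde{G}_m^{**}(0|\cdot)=0$, the expectation collapses to $\beta_{0,n_m}(1-\epsilon)\widetilde{G}_m^{**}(r_{0,n_m,m}|\cdot)$; reinserting the vanishing $n\neq n_m$ terms produces the sum $\sum_{n=1}^N\beta_{0,n}(1-\epsilon)\widetilde{G}_m^{**}(r_{0,n,m}|\cdot)$ of the lemma, which, substituted into \eqref{eq:CAWSG}, gives the displayed expression.

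I expect the exchange argument in (a) to be the main obstacle. Making ``sufficiently good'' precise so that one BS-RS subchannel already overshoots $R_m^{\star}$, and simultaneously guaranteeing the existence of an alternative recipient whose per-subchannel weighted goodput strictly beats the small marginal $(1-\beta_{0,n_1}(1-\epsilon))\widetilde{G}_{m,\max}^{**}$, requires careful bookkeeping over the weights $\{w_{m,k}\}$, the availability probabilities $\{\beta_{m,n}\}$, and the RS-MS channel statistics. The concavity and saturation of $\widetilde{G}_m^{**}$, while intuitive, also merit a clean justification from the Lagrangian structure developed in Appendix~A.
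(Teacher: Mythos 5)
Your part (b) --- the factorization of $\mathbf{E}_{\mathbf{T}_m}$ once at most one $r_{0,n,m}$ is positive, using $\widetilde{G}_m^{**}(0)=0$ and $\Pr[t_{n,m}=1]=(1-\epsilon)\beta_{0,n}$ --- is exactly the computation in the paper's Appendix B. For the structural claim (a), however, you take a genuinely different route. The paper obtains it in two short steps from Appendix A: the dual solution assigns each subchannel to a single recipient with probability one (the marginal-benefit indices $X_{0,n,k}$ are continuous, independent random variables, so ties have probability zero), and since one BS--RS subchannel already carries enough bits for the entire phase-two transmission, each relay ends up with at most one subchannel. You instead argue from the shape of $\widetilde{G}_m^{**}$ (zero at the origin, non-decreasing, concave, saturating at $R_m^\star$) together with an exchange argument that quantifies the marginal value $\beta_{0,n_2}(1-\epsilon)\bigl(1-\beta_{0,n_1}(1-\epsilon)\bigr)\widetilde{G}_{m,\max}^{**}$ of a second, redundant subchannel. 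Your route is more explicit about the one point the paper glosses over: because $t_{n,m}$ is random, a second subchannel is not worthless to a saturated relay --- it is a backup against outage and PU activity --- and one must actually show that redirecting it elsewhere does better. That dominance step is the residual obligation in your plan (the factor $1-\beta_{0,n_1}(1-\epsilon)$ need not be small, and $\widetilde{G}_{m,\max}^{**}$ may dwarf any alternative recipient's per-subchannel value); the paper does not discharge it either, simply asserting the conclusion after noting saturation, so you are not below the paper's standard of rigor here, but a complete proof would need this bookkeeping. One small remark: Lemma~\ref{lem:linear} calls $\widetilde{G}_m^{**}$ ``convex piecewise linear,'' yet the construction in Appendix C (upper boundary of a convex hull, scheduling the highest-weight class first) yields precisely the concave, saturating shape you assume, so your properties (i)--(iii) are consistent with what the paper actually builds.
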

\begin{proof}
Please refer to Appendix B.
\end{proof}

Since the Subproblem \ref{sp:bs} is calculated at the BS, each RS should inform the expression of
$\widetilde{G}_{m}^{**}(r)$ to BS. The feedback of accurate $\widetilde{G}_{m}^{**}(r)$ expression involves large
feedback overhead. In the following lemma, we show that the feedback overhead can be significantly  reduced when the
user density $\rho$ is sufficiently large:
\begin{lemma}
When the user density $\rho$ is sufficiently large, $\widetilde{G}_{m}^{**}(r)$ is a convex piecewise linear
function.\label{lem:linear}
\end{lemma}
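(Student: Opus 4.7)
The plan is to identify $\widetilde{G}_m^{**}(r)$ with the optimal value of a parametric linear program in which $r$ appears linearly on the right-hand side of a single constraint, so that piecewise linearity follows from standard LP sensitivity theory. The first and least routine step is to argue that when $\rho$ is large, the optimal subchannel allocation inside Subproblem~\ref{sp:rs} is integer-valued, so that each subchannel $n$ is assigned to a single best user $\pi(n)$. This is exactly the multi-user-diversity mechanism already exploited for Lemma~\ref{lem:obj}: as the number of users in the RS's cluster grows, the maximum of $\{w_{m,k}\log_2(1+p_{m,n,k}\phi_{m,n,k})\}_{k}$ is achieved by a single user with probability approaching one, and concavity of $\alpha\log(1+p\phi/\alpha)$ in the pair $(\alpha,p)$ together with the KKT conditions then forces $\alpha_{m,n,k}^{*}\in\{0,1\}$ at the optimum.

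I would then freeze the assignment $\pi$ and the resulting per-subchannel capacities $C_{m,n}$, which are obtained by waterfilling of $\{p_{m,n,\pi(n)}\}$ against the power budget $P_m$ in \eqref{Eq:Const1:r} and the interference budget in \eqref{eqn:inf-cont}; both budgets are independent of the packet-partition variables, so $C_{m,n}$ does not depend on $r$. Introducing per-user budgets $R_{m,k}:=\sum_{n}t_{n,m}d_{m,n,k}r_{0,n,m}$, Subproblem~\ref{sp:rs} becomes the linear program
\begin{eqnarray*}
\max \sum_{n}(1-\epsilon)w_{m,\pi(n)}\beta_{m,n}r_{m,n} \quad \text{s.t.}\quad 0\leq r_{m,n}\leq C_{m,n},\;\; \sum_{n:\pi(n)=k}r_{m,n}\leq R_{m,k}.
\end{eqnarray*}
The outer maximisation over $\{d_{m,n,k}\}$ that defines $\widetilde{G}_m^{**}(r)$ is equivalent to a maximum over $\{R_{m,k}\}\geq 0$ subject only to $\sum_{k}R_{m,k}\leq r$, since the packet-partition constraint $\sum_{k}d_{m,n,k}\leq 1$ aggregated across $n$ gives exactly this simplex, and any vector in it is realised by some admissible $\{d_{m,n,k}\}$.

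Appending the single constraint $\sum_{k}R_{m,k}\leq r$ to the LP above produces a combined linear program in the decision variables $(\{r_{m,n}\},\{R_{m,k}\})$ in which $r$ enters only on the right-hand side of one constraint. By standard parametric-LP sensitivity theory \cite{BoydBook} the optimal value $\widetilde{G}_m^{**}(r)$ is then a continuous, nondecreasing, piecewise-linear function of $r$, with breakpoints located at the finitely many values of $r$ at which the optimal basis changes, which is the claim. The main obstacle is the first step: rigorously pushing the multi-user-diversity argument through to obtain $\alpha^{*}\in\{0,1\}$ uniformly in the per-user budgets $\{R_{m,k}\}$. I expect to be able to reuse the probabilistic tools developed in the proof of Lemma~\ref{lem:obj} (Appendix~B), where an analogous asymptotic-assignment statement is justified for the BS--RS hop; the RS--MS hop inside Subproblem~\ref{sp:rs} has the same structure, and the argument should transfer with only cosmetic changes.
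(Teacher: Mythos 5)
Your reduction of the outer maximisation over $\{d_{m,n,k}\}$ to a simplex constraint $\sum_k R_{m,k}\le r$ is correct, and the parametric-LP endgame is sound \emph{for a fixed subchannel assignment}. The genuine gap is the step where you freeze the assignment $\pi$ and the per-subchannel capacities $C_{m,n}$ and declare them independent of $r$. They are not: in Subproblem~\ref{sp:rs} the flow-balance constraints \eqref{Eq:FlowConst} enter the optimal power and subchannel allocation through their multipliers $\mu_k$ (see Appendix~A, where $p_{m,n,k}$ and the marginal benefit $X_{m,n,k}$ both depend on $\mu_k$ and the subchannel goes to $\arg\max_k X_{m,n,k}$). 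As $r$ shrinks and a favoured user's budget $R_{m,k}$ binds, $\mu_k$ rises and the true optimum hands that subchannel (and its power) to a different user; conversely, as $r$ grows past the point where your chosen users saturate, the true optimum brings in further users. Your LP, which pins each subchannel to a single $\pi(n)$ chosen once, can only set $r_{m,n}=0$ or plateau at $\sum_n C_{m,n}$ in those regimes, so it computes a strict lower bound on $\widetilde{G}_m^{**}(r)$ over exactly the range of $r$ where the piecewise-linear structure is the content of the lemma. The argument you hope to import from Lemma~\ref{lem:obj} does not close this: Appendix~B concerns the phase-one claim that each relay occupies at most one subchannel, not the budget-dependence of the phase-two assignment.

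The paper's own proof (Appendix~C) takes a different and, on this point, more careful route: it groups users into $L$ QoS classes, argues that for large user density the best user in each class has essentially deterministic fading (of order $\ln K$), so the per-class operating points $(r_{m,l},g_{m,l})$ --- total input rate and weighted goodput when only class $l$ is served --- are constants, and then identifies $\widetilde{G}_m^{**}(r)$ with the upper boundary of the convex hull of these $L$ points together with the origin, achieved by time-sharing among classes. The assignment is thus allowed to vary with $r$, but only over $L$ deterministic candidates, which simultaneously yields piecewise linearity and the $O(L)$-breakpoint parametrisation needed for the low-overhead feedback claim. Your approach can be repaired along the same lines: run your parametric LP once per candidate assignment and take the concave (time-sharing) envelope of the finitely many resulting value functions --- but at that point you have essentially reconstructed the paper's convex-hull argument, and the single-LP shortcut as written does not suffice.
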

\begin{proof}
Please refer to Appendix C.
\end{proof}

The construction of function $\widetilde{G}_{m}^{**}(r)$ is presented in Appendix C as well. Moreover, an example of
function $\widetilde{G}_{m}^{**}(r)$ is illustrated in Figure \ref{Fig:linear}. With the conclusions of the Lemma
\ref{lem:obj}, the Subproblem \ref{sp:bs} is a convex optimization problem and can also be solved by the duality
approach (Similar to Subproblem \ref{sp:rs}) which is presented in in Appendix A. As a result, the overall decentralized resource
allocation algorithm for the relay-assisted CR system is summarized below:

\begin{Algorithm}[Decentralized Asymmetrical Optimal Control Algorithm]
The overall decentralized control algorithm includes the following steps:
\begin{itemize}
\item \textit{Step 1 (Cluster-Based Spectrum Sensing):} For $m=\{0,..,M\}$, mobiles in cluster $m$
deliver the 1-bit RSI to the cluster controller (BS or RS).

\item \textit{Step 2 (Backward Recursion):} The $m$-th RS feeds back the function
$\widetilde{G}_{m}^{**}(\cdot)$ to the BS.

\item \textit{Step 3 (Online Strategy --- Phase One):} From the local CSI ($\hat{\mathbf{H}}_0$),
local RSI ($\hat{\mathbf{S}}_0$) and $\widetilde{G}_{m}^{**}(\cdot)$, the BS determines the power, rate and subchannel
allocation of the mobiles in cluster $0$ as well as the RSs using the iterative algorithm for Subproblem $2$ in
Appendix A.

\item \textit{Step 4 (Online Strategy --- Phase Two):} If the $m$-th RS decodes the information
from the BS successfully, it will determine the power, rate, subchannel allocation to the MSs in its cluster based on
the local CSI ($\hat{\mathbf{H}}_m$) and RSI ($\hat{\mathbf{S}}_m$) using the solution of Subproblem $1$ in appendix A.

\end{itemize}
\end{Algorithm}

\textbf{Remarks:} The solution is decentralized in the sense that the computational loading is shared between the BS
and the RSs. Furthermore, only local CSI is needed at the $m$-th RS and the BS and this substantially reduces the
required signaling loading to deliver the global CSI in conventional centralized approach. While the $m$-th RS needs to
feedback $\widetilde{G}_{m}^{**}(\cdot)$ to the BS, the required signaling loading is very small because
$\widetilde{G}_{m}^{**}(\cdot)$ is a piecewise-linear function (as illustrated in Figure \ref{Fig:linear}) and it can
be characterized by O(ML) parameters in the worst case ($L$ is the number of QoS levels).

\subsection{PFS scheduling for Two-Hop RS-Assisted Cognitive OFDMA System}

The system objective function of PFS is given by $\sum\limits_{m,n,k} \frac{U_{m,n,k}}{\widetilde{R}_{m,k}}$, where
$\widetilde{R}_{m,k}$ is the average throughput of the $k$-th user in the $m$-th cluster. As a result, the PFS is a
special case of the weighted goodput objective considered in the paper. Yet, brute-force applications of the solution
in the pervious section in PFS will incur a large signaling overhead from the RS to the BS because the
$\widetilde{G}_{m}^{**}(\cdot)$ of PFS involves very large number of parameters (and hence, induce huge signaling
overhead for $m$-th RS to feedback $\widetilde{G}_{m}^{**}(\cdot)$ to the BS). In the following, we obtain a simple
characterization of $\widetilde{G}_{m}^{**}(\cdot)$ (which is asymptotically optimal) under PFS.

\begin{lemma}
Suppose the links between the base station and the relays are
sufficiently good, if $K_m$ is sufficiently large,
$\widetilde{G}_{m}^{**}(r)$ can be simplified as follows in
Subproblem 2
\begin{equation}
\widetilde{G}_{m}^{**}(r) = \left\{\begin{array}{ll}
\sum_{n=1}^N \frac{r w_{m,A_{m,n}}\beta_{m,n}(1-\epsilon)}{4R_m}\log_2(1+p_{m,n}l_{m,n,A_{m,n}}\phi_{m,n,A_{m,n}})&   r\leq R_m \\
\sum_{n=1}^N \frac{w_{m,A_{m,n}}\beta_{m,n}(1-\epsilon)}{4}\log_2(1+p_{m,n}l_{m,n,A_{m,n}}\phi_{m,n,A_{m,n}})  &  \mbox{Otherwise}\\
\end{array}\right. \label{eq:lem2}
\end{equation}
where $$R_m = \sum_{n=1}^N \frac{1}{4}\log_2(1+p_nl_{m,n,A_{m,n}}\phi_{m,n,A_{m,n}}),\quad p_{m,n} =
\frac{\beta_{m,n}P_0}{\sum_{n=1}^N\beta_{m,n}}$$ and $$A_{m,n} = \arg\max_k
w_{m,k}\log_2(1+p_{m,n}l_{m,n,k}\phi_{m,n,k}).$$\label{lem:pfs}
\end{lemma}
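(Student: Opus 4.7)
The plan is to exploit the asymptotic structure of PFS together with the large-$K_m$ limit to collapse the general piecewise-linear function $\widetilde{G}_{m}^{**}(\cdot)$ guaranteed by Lemma~\ref{lem:linear} into the explicit two-piece form in~(\ref{eq:lem2}). The central mechanism is the classical opportunistic-scheduling fact that under PFS with many competing users the time-shared subchannel allocation reduces to pure winner-takes-all (one user per subchannel), and the selected user's weighted rate equalizes across subchannels up to lower-order terms.

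First, I would start from the dual solution of Subproblem~\ref{sp:rs} in Appendix~A and show that, for $K_m$ large, the subchannel-allocation variable $\alpha_{m,n,k}$ concentrates on the single user $A_{m,n}=\arg\max_k w_{m,k}\log_2(1+p_{m,n}l_{m,n,k}\phi_{m,n,k})$ on each subchannel $n$, and that under the PFS weights $w_{m,k}=1/\widetilde{R}_{m,k}$ the scheduled value $w_{m,A_{m,n}}\log_2(1+p_{m,n}l_{m,n,A_{m,n}}\phi_{m,n,A_{m,n}})$ is asymptotically independent of $n$. Next, because the BS-RS links are assumed sufficiently good, the per-user flow-balance constraint~(\ref{Eq:FlowConst}) becomes slack at the optimum, so the power subproblem reduces to maximizing $\sum_n \beta_{m,n}\log_2(1+p_{m,n}l_{m,n,A_{m,n}}\phi_{m,n,A_{m,n}})$ under $\sum_n p_{m,n}\leq P_m$. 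In the induced high-SNR regime the waterfilling solution degenerates to the availability-weighted equal split $p_{m,n}=\beta_{m,n}P_m/\sum_{n'}\beta_{m,n'}$ claimed in the lemma; plugging this in defines the per-subchannel capacity $\tilde{c}_{m,n}=\tfrac{1}{4}\log_2(1+p_{m,n}l_{m,n,A_{m,n}}\phi_{m,n,A_{m,n}})$ whose sum is exactly $R_m$.

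With the user and power decisions in hand, I would finally maximize over the packet partition $\{d_{m,n,k}\}$: the phase-one budget $r=\sum_n r_{0,n,m}t_{n,m}$ is routed to the scheduled users subject to $\sum_n r_{m,n,A_{m,n}}\leq r$ together with the per-subchannel cap $r_{m,n,A_{m,n}}\leq \tilde{c}_{m,n}$. When $r\geq R_m$ all caps saturate simultaneously and the weighted-goodput sum gives the second branch of~(\ref{eq:lem2}). When $r<R_m$, the equalization from the first step implies that the marginal weighted goodput $w_{m,A_{m,n}}\beta_{m,n}(1-\epsilon)$ per delivered bit is common across $n$ to leading order, so the general greedy-fill piecewise-linear solution of Lemma~\ref{lem:linear} degenerates to a single linear segment, and the proportional allocation $r_{m,n,A_{m,n}}=(r/R_m)\tilde{c}_{m,n}$ yields precisely the first branch of~(\ref{eq:lem2}).

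The hard part will be making the equalization rigorous in the non-homogeneous setting where $\beta_{m,n}$ and the path loss $l_{m,n,k}$ vary across $n$ and $k$: the classical PFS equalization is usually stated for i.i.d.\ subchannels, and here I would need to argue carefully --- leveraging the independence of the short-term fading $H_{m,n,k}$ across $(n,k)$ and the extreme-value concentration induced by large $K_m$ --- that the relative dispersion of $w_{m,A_{m,n}}\beta_{m,n}$ across subchannels vanishes in the limit, so that the two-piece formula~(\ref{eq:lem2}) is an asymptotically tight representation of the true $\widetilde{G}_{m}^{**}$ rather than merely an achievable lower bound.
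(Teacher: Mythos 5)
Your treatment of the regime $r\geq R_m$ (single-user-per-subchannel selection $A_{m,n}$, degeneration of power allocation to the availability-weighted equal split in the large-$K_m$/high-SNR limit, saturation of all per-subchannel caps) matches the paper's argument. The gap is in the branch $r<R_m$. You try to prove that $\widetilde{G}_{m}^{**}$ genuinely \emph{equals} the single linear segment there, by arguing that the marginal weighted goodput per delivered bit, $w_{m,A_{m,n}}\beta_{m,n}(1-\epsilon)$, equalizes across subchannels. That is the wrong quantity to extract from PFS equalization: what concentrates under PFS with many users is (roughly) $w_{m,A_{m,n}}$ times the per-subchannel \emph{rate}, not $w_{m,A_{m,n}}\beta_{m,n}$; the factors $\beta_{m,n}$ are conditional availability probabilities driven by the sensing outcomes and have no reason to homogenize as $K_m\to\infty$. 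Consequently the greedy fill over subchannels generally produces a piecewise-linear curve lying strictly above the chord from $(0,0)$ to $\left(R_m,\widetilde{G}_{m}^{**}(R_m)\right)$, and the first branch of \eqref{eq:lem2} is only a lower bound. The ``hard part'' you flag at the end is therefore not merely hard to make rigorous --- it is not true in the stated generality.

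The paper does not need it. Its proof concedes exactly this point: linear interpolation gives only $\widetilde{G}_{m}^{**}(r)\geq \sum_{n} \frac{r\, w_{m,A_{m,n}}\beta_{m,n}(1-\epsilon)}{4R_m}\log_2\left(1+p_{m,n}l_{m,n,A_{m,n}}\phi_{m,n,A_{m,n}}\right)$ for $r<R_m$, and it then observes that under the standing assumption of sufficiently good BS--RS links the base station always delivers at least $R_m$ bits to the relay, so the operating point never falls in the region $r<R_m$; replacing the true curve by this underestimate there leaves the BS's optimal decision in Subproblem 2 unchanged. That is the missing idea in your proposal: the lemma is a statement about an asymptotically \emph{equivalent input to the BS optimization}, not a pointwise identity for $\widetilde{G}_{m}^{**}$. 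Reorienting your last step around this observation closes the gap and removes the need for any cross-subchannel equalization argument.
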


\begin{proof}
Please refer to Appendix D.
\end{proof}

Since $\widetilde{G}_{m}^{**}(r)$ can be parameterized by $\left(R_m,\sum_{n=1}^N
\frac{w_{m,A_{m,n}}\beta_{m,n}(1-\epsilon)}{4}\log_2(1+p_{m,n}l_{m,n,A_{m,n}}\phi_{m,n,A_{m,n}})\right)$, the feedback
overhead to deliver $\widetilde{G}_{m}^{**}(r)$ from the $m$-th RS to the BS is very small and does not scale with
$K_m$.

\section{Asymptotic Goodput of Two-Hop RS-Assisted Cognitive OFDMA Systems under PFS}\label{Section:AsympAna}

In this section, we analyze the asymptotic performance of the scheduling algorithm derived in the preceding section.
Specifically, the system throughput is derived for a sufficiently large number of users in each cluster. To obtain
insights on the performance gains, we impose a set of simplifying assumptions. We assume each cluster contains $K_c$
MSs. Furthermore, we assume line-of-sight link (with high gain antenna) between the RSs and the BS and hence, the
throughput is limited by the second hop. Finally, users will not be closer than $\gamma$ to the RS, where $\gamma$ is
certain fixed distance. The following theorem summarizes the asymptotic system goodput of the relay-assisted cognitive
OFDMA system under PFS.

\begin{theorem}
\textcolor{black}{Suppose there are $ M $ RS clusters and $ N $
independent subchannel in the system. Furthermore, consider a simple
scenario where there is one PU in each of the M RS clusters and the
BS cluster, as shown in Figure \ref{Fig:model}. Let $q_p$ be the
probability that a PU becomes active in a subchannel.  For
sufficiently large number of MSs per cluster $K_c$ and sufficiently
strong BS-RS links in the above system, the average throughput of
the $k$-th user in the $m$-th cluster ($m>1$) achieved under the
proportional fair scheduling is given by}
\begin{eqnarray}
\overline{T}_{m,k} &=& \frac{N(1-q_p)(1-q_p^N)}{K_c} \int_{0}^{+\infty}
  \frac{1}{4}\log_2 (1+\frac{P_m}{N} l_{m,k} x) d F_{max,K_c}(x) \\
&=& \frac{N(1-q_p)(1-q_p^N)}{4K_c}
  \log_2 (1+\frac{P_m}{N} l_{m,k} \ln K_c) \ \ \mbox{when }K_c \rightarrow +\infty,\label{eq:avg-relay}
\end{eqnarray}
where $F_{max,K_c}(x)$ is the CDF of $\max\{|H_{m,n,k}||\forall k\}$. The equivalent PFS scheduling rule at the RS is
given by
\begin{equation}
A_{m,n} = \arg\max_k \{|H_{m,n,k}||\forall k\} \label{eqn:asy-user}
\end{equation}
where $A_{m,n}$ is the selected user of the subchannel $n$ in
the $m$-th cluster.\label{thm:aym}
\end{theorem}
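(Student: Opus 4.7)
The plan is to decompose the throughput into four independent factors whose product is \eqref{eq:avg-relay}, and then invoke extreme-value theory on the residual expectation. First I would exploit the strong BS-RS link assumption: as long as at least one subchannel in the BS cluster is not blocked by the BS-cluster PU, the BS can deliver enough data that the binding constraint on user $k$'s throughput is the second hop. Since the $N$ subchannels of the BS cluster are i.i.d.\ available with probability $1-q_p$, this conditioning event has probability $1-q_p^N$. Conditioned on this and on the $n$-th subchannel in cluster $m$ being free (probability $1-q_p$), Lemma~\ref{lem:pfs} reduces the RS's task to the per-subchannel greedy rule $A_{m,n}=\arg\max_k w_{m,k}\log_2(1+p_{m,n}l_{m,n,k}\phi_{m,n,k})$ with the uniform allocation $p_{m,n}=P_m/N$; the uniformity is asymptotically optimal by symmetry of the $N$ subchannels in distribution.

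Second, I would invoke the classical proportional-fair argument to pass from the weighted selection rule to \eqref{eqn:asy-user}. Under PFS the long-run throughputs $\widetilde R_{m,k}$ self-equilibrate so that $w_{m,k}=1/\widetilde R_{m,k}$ exactly compensates for user-specific statistics (in particular the path loss $l_{m,k}$), and the normalized ratio $|H_{m,n,k}|/\mathbb{E}|H_{m,n,k}|$ becomes the effective comparison variable. Since $H_{m,n,k}\sim\mathcal{CN}(0,1)$ i.i.d.\ across $k$, this collapses to $A_{m,n}=\arg\max_k |H_{m,n,k}|$, and by symmetry each user wins any given subchannel with probability $1/K_c$.

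Third, when user $k$ wins subchannel $n$, its conditional squared fading magnitude is distributed as $\max_{j\le K_c}|H_{m,n,j}|^2$, so its conditional rate is $\tfrac{1}{4}\log_2\!\bigl(1+\tfrac{P_m}{N}l_{m,k}x\bigr)$ integrated against $dF_{\max,K_c}(x)$. Combining the four factors --- $N$ subchannels, PU-idle probability $1-q_p$ on the second hop, BS-feasibility probability $1-q_p^N$ on the first hop, and per-user winning probability $1/K_c$ --- yields the integral expression in the theorem. The closed-form asymptotic then follows from the standard extreme-value fact that the maximum of $K_c$ i.i.d.\ Exp(1) variables concentrates at $\ln K_c + O(1)$; by dominated convergence and continuity of $\log_2(1+\,\cdot\,)$ the integral reduces to $\tfrac{1}{4}\log_2(1+\tfrac{P_m}{N}l_{m,k}\ln K_c)$.

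The main obstacle will be rigorously justifying the passage from the weighted PFS rule of Lemma~\ref{lem:pfs} to the unweighted $\arg\max|H_{m,n,k}|$ rule of \eqref{eqn:asy-user}: the weights $w_{m,k}=1/\widetilde R_{m,k}$ are endogenous, coupled across frames and across users, and depend on the entire scheduling history, so one must argue that in the large-$K_c$ limit the fixed point of this self-referential process is exactly the one in which path-loss differences are absorbed into the weights and only the relative fading drives the selection. A secondary subtlety is justifying uniform power $P_m/N$ as asymptotically optimal; this should follow from the symmetry of the $N$ subchannels together with the joint concavity of the inner objective once the scheduler has locked in the winning user.
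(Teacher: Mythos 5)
Your proposal follows essentially the same route as the paper's own (sketch) proof: the strong BS--RS link reduces each relay cluster to a single-cell PFS system with infinite backlog whenever at least one BS-cluster subchannel is free (probability $1-q_p^N$), the access factors $(1-q_p)(1-q_p^N)$ and the $1/K_c$ winning probability give the prefactor, and the extreme-value concentration of $\max_k |H_{m,n,k}|^2$ near $\ln K_c$ yields the closed form. The one step you correctly flag as the main obstacle --- passing from the endogenous weighted PFS rule to the unweighted $\arg\max_k |H_{m,n,k}|$ selection --- is exactly the step the paper does not prove directly but delegates to the argument of Caire et al.\ \cite{Caire:07}, so your treatment is, if anything, slightly more explicit than the paper's.
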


\begin{proof}
Please refer to Appendix E.
\end{proof}

\textcolor{black}{Using similar analysis as in Appendix E, it can be
shown that the average goodput (under PFS) of a mobile in a
cognitive OFDMA system without RS is given by:
\begin{eqnarray}
\overline{T}_{m,k}^{(b)} &=& (1-q_p)^{M+1}\frac{N}{MK_c} \log_2 (1+\frac{P_0}{N}l_{m,k}^{(b)}\ln
  K_cM)  \ \ \mbox{when }K_c \rightarrow +\infty \label{eq:avg-base},
\end{eqnarray}
where $F_{max,MK_c}(x)$ is the CDF of $\max\{|H_{m,n,k}||\forall k,m\}$, $l_{m,k}^{(b)}$ is the long-term path loss and shadowing from the user to the base station, the coefficient $(1-q_p)^{M+1}$ before the logarithm is the probability that one subchannel is available\footnote{Since $q_p$ is the probability that a PU will be active in one subchannel of one cluster, the probability that one subchannel being clean (no active PU) in the whole cell is $(1-q_p)^{M+1}$. Since the BS can transmit packets to the cell edge users in one subchannel only when this subchannel is clean in the whole cell area (i.e. all PUs in the coverage area are IDLE). Thus, the probability that one subchannel is available is $(1-q_p)^{M+1}$. }, the $N$ in the numerator of the coefficient before logarithm is because there are $N$ parallel independent subchannels, and the $MK_c$ in the denominator of the coefficient is because in each subchannel the access probability of each MS is $1/(MK_c)$.} Compared with the results in Theorem \ref{thm:aym}, it can be
concluded that
\begin{itemize}
\item The system goodput of the regular cognitive OFDMA system
without relay stations is $\mathcal{O}(N(1-q_p)^{M+1}\ln \ln MK_c)$, which is very sensitive to the PU activity due to
the factor $(1-q_p)^{M+1}$. For moderate $q_p$, the spectrum access opportunity of the cell-edge users is very small.

\item The spectrum access opportunity of the cell-edge users can
be improved by employing relays. Active primary users in one relay cluster would not affect the packet transmission on
other relay clusters as illustrated by the factor $(1-q_p)(1-q_p^N)$ in equation (\ref{eq:avg-relay}). Moreover, the
receiving SNR at the mobile users is significantly increased by employing relays ($l_{m,k}>>l_{m,k}^{(b)}$). As a
result, the relay-assisted CR system can achieve much higher system throughput than the baseline system without relays
under PFS.
\end{itemize}

\section{Simulation Results and Discussions}\label{Section:Sim}
In this section, we shall compare the performance of the proposed
relay-assisted cognitive OFDMA system with several baseline systems.
Baseline 0 refers to a naive design of a cognitive OFDMA system
(without RS) where the power, rate and subchannel allocation are
designed assuming perfect CSIT. \textcolor{black}{Baseline 1 refers
to the Separate and Sequential Allocation (SSA) in relay-assisted
cognitive OFDMA system, which is a semi-distributed scheme proposed
for relay-assisted OFDMA systems in \cite{RS:10}. Similar approach
also appears in \cite{Awad:08,ZhihuaTang:09}.} Baseline 2 and 3
refer to a similar cognitive OFDMA system (without RS). Moreover, in
baseline 1 and 2, the PU activity in RS clusters is the same as that
in BS cluster, i.e. $q_{pm}=q_{p0}$. In baseline 3, the PU activity
in RS clusters is much lower than that in BS cluster, i.e.
$q_{pm}=1-(1-q_{p0})^{1/6}$. In Baseline 1, 2 and 3, the control
policy are designed for imperfect CSIT.  The overall cell radius of
the system is 5000m\footnote{\textcolor{black}{5000 m is one of the
typical cell radius for LTE and LTE-advanced systems (e.g. rural
area)\cite{3GPP:25913}.}} in which Cluster 0 has radium of 2000m and
RS 1-6 are evenly distributed on a circle with radius 3000m as
illustrated in Figure \ref{Fig:model}. MSs randomly distribute in
the cell with $K_0=10$ MSs in Cluster 0 and $K_m=5$ in Cluster $m
(m=1,\cdots, 6)$. \textcolor{black}{The path loss model of BS-MS and
RS-MS is $128.1+37.6\log_{10}(R)$ dB, and path loss model of BS-RS
is $128.1+28.8\log_{10}(R)$ dB ($R$ in km). The lognormal shadowing
standard deviation is 8 dB. There are 64 subcarriers with 4
independent subchannels. The small scale fading follows
$\mathcal{CN}(0,1)$. We set up our simulation scenarios according to
the practical settings \cite{Std:4G-EMD:16m}.} The average
interference constraint to the PU is 0 dB.  Each point in the
figures is obtained by averaging over $2000$ independent fading
realizations.

\textbf{System Performance versus PU Activities:} Figure
\ref{Fig:qp-new} illustrates the PFS objective
$\sum_{k}\log{R_{k}}$\footnote{\textcolor{black}{$\sum_{k}\log{R_{k}}$
is the PFS optimization objective which is a good indication on the
tradeoff between throughput and fairness.}} (average sum-log-rate
{\em successfully received} by each MS) and access
probability\footnote{\textcolor{black}{Access probability is the
probability that a MS on the cell edge is allowed to receive data on
at least one subchannel in a scheduling slot.}} of MSs in Cluster
$m$ ($m=1,\cdots,M$) versus PU activity $q_{p}$ at receive $SNR=10$
dB and $\sigma_e^2=0.01$. $\sum_{k}\log{R_{k}}$ and access
probability decrease with the increase of PU activities. It can be
observed that our proposed scheme provides much greater access
probability as well as fairness/throughput performance for the MSs
at the cell edge compared with baseline 2 and 3 over a wide range of
PU activities. This performance gain is contributed by the
conventional RS path-loss gain as well as the increase in the access
opportunity for MS at the edge.

Figure \ref{Fig:bar-new} illustrates histogram of the average
goodput of MSs  (average data rate successfully received by the MSs)
at various distance from the BS at receive $SNR=10$ dB and
$\sigma_e^2=0.01$. It can be observed that baseline 2 can deliver
large system goodput only for those MSs close to the BS. It has very
low access probability and average goodput for those far-away
mobiles, causing severe fairness issues. However, there is a
significant gains in the system goodput of far-away MSs in the
proposed system and baseline 1, illustrating both the throughput and
fairness advantage of the system with RSs. Furthermore, the proposed
scheme outperforms SSA scheme in baseline 1.
\textcolor{black}{Figure \ref{Fig:cdf} illustrates the corresponding
CDF of average goodput of MSs at various distance. The low average
goodput regime ($x$-axis) demonstrates the performance of cell-edge
users: the larger probability ($y$-axis) in low average goodput
regime the larger average goodput of the cell-edge users.  It can be
observed that our proposed scheme brings better performance (larger
average goodput to the cell-edge users and sum average goodput of
all users ) compared with the baselines.}

\textbf{System Performance versus Receive SNR:} Figure
\ref{Fig:pwr-new} illustrates the PFS objective
$\sum_{k}\log{R_{k}}$ (average sum-log-rate {\em successfully received} by each MS) and access probability of MSs in Cluster m
($m=1,\cdots,M$) versus receive $SNR$. It can be observed that our
proposed design has significant gain over the baseline 1, 2 and 3
systems. The gain is more prominent at low SNR region because the
conventional RS reduce the path loss greatly and utilizes the
limited power more efficiently.

\textbf{System Performance versus the Number of MSs:} Figure
\ref{Fig:largek-new} illustrates $\sum_{k}\log{R_{k}}$ (average sum-log-rate {\em successfully received} by each MS) versus the
number of MSs in a cell at receive $SNR=10$ dB and
$\sigma_e^2=0.01$. The ratio between $K_0$ and $K_m$ is kept
constant. While the proposed scheme has the best performance over
baseline 2 and 3, the performance of all the three schemes increases
with K, which demonstrated the multi-user diversity gain in the
system.

\textbf{System Performance versus CSIT quality:} Figure
\ref{Fig:csiterror-new} illustrates the average system goodput (average data rate {\em successfully received} by the MS)
versus CSIT quality. The performance gain of the proposed scheme
versus baseline 1 illustrates the robustness of the proposed scheme
w.r.t. CSIT errors. On the other hand, comparison between baseline 2
and baseline 0 illustrated that it is very important to take CSIT
errors into the design. Baseline 0 has very poor performance because
there are a lot of error packets due to channel outage.

\section{Conclusion}\label{Section:Conclusion}
In this paper, we have proposed the design of downlink two-hop relay-assisted cognitive OFDMA system, which has the
cluster-based architecture and dynamically shares the spectrum of PU systems. Optimal decentralized algorithms have
been derived for joint rate and power control, and subchannel allocation at the RS and the BS respectively. These
algorithms maximize the weighted system goodput where proportional fair is included as a special case. The solution
processed local system state measurement at the BS and the RS to compute (locally) the power, rate and subchannel
allocations of the BS and RS. Imperfect system state measurement has been taking into consideration to maintain robust
performance of the SU and the PU systems.  Significant throughput gains have been observed from simulation results. We
have also derived a simple (asymptotically optimal) control algorithm as well as the closed-form performance for PFS
for sufficiently large number of users.

\section*{Appendix A: Solution of Subproblem \ref{sp:rs} and Subproblem \ref{sp:bs}}
The gradient of $L_m$ in subproblem 1 vanishes at the maximum, so we have
\begin{eqnarray}
\nonumber \frac{\partial{L_m}}{\partial{p_{m,n,k}}}=0 \Rightarrow
p_{m,n,k}&=&\alpha_{m,n,k}\bigg(\frac{\frac{1}{4}\big((1-\epsilon)\beta_{m,n}w_{m,k}-\mu_k\big)}{ln2\big(\nu+\eta_{n}\tau^2_{m,n}(1-\beta_{m,n})\big)}-\frac{1}{\phi_{m,n,k}}\bigg)^{+}\\
\nonumber \frac{\partial{L_m}}{\partial{\alpha_{m,n,k}}}=0
\Rightarrow
X_{m,n,k}&\triangleq&\frac{1}{4}\big((1-\epsilon)\beta_{m,n}w_{m,k}-\mu_k\big)\bigg(log_2(1+\frac{p_{m,n,k}\phi_{m,n,k}}{\alpha_{m,n,k}})\nonumber\\
& &
-\frac{p_{m,n,k}\phi_{m,n,k}}{ln2\(\alpha_{m,n,k}+p_{m,n,k}\phi_{m,n,k}\)}\bigg)=\lambda_{n}
\end{eqnarray}
and  $X_{m,n,k}$ can be interpreted as marginal benefit of extra
bandwidth. For a particular $\nu$, if there is a unique
$k^{*}=\arg \max\{X_{m,n,k}\}$ for some $n$, time-sharing will not
happen in this subchannel.
\begin{eqnarray}
\nonumber \alpha_{m,n,k} &=& \left\{ \begin{array}{ll} 1, &
X_{m,n,k}=\max_{k} \big\{X_{m,n,k}\big\} >0\\
0, & \textrm{otherwise}
\end{array} \right.
\end{eqnarray}
Since for each given $\mu$, $X_{m,n,k}$ is a function of the CSI
$\phi_{m,n,k}$, they are independent random variable. As a result,
there is probability 1 that one subchannel is assigned to a single
user.

We use the subgradient method to update the multipliers as follows
\begin{eqnarray}
\nonumber\lambda_n(i+1)&=&\bigg[\lambda_n(i)-\delta(i)\big(1-\sum_{k=1}^{K_m}
\alpha_{m,n,k}\big)\bigg]_\chi \forall n\\
\nonumber \nu(i+1)&=&\bigg[\nu(i)-\delta(i)\big(P_m-\sum_{n=1}^N
\sum_{k=1}^{K_m} p_{m,n,k}\big)\bigg]_\chi\\
\nonumber
\eta_n(i+1)&=&\bigg[\eta_n(i)-\delta(i)\big(I_{m,n}-\sum_{k=1}^{K_m}
(1-\beta_{m,n}) \tau^2_{m,n} p_{m,n,k} \big)\bigg]_\chi \forall n\\
\nonumber
\mu_k(i+1)&=&\bigg[\mu_k(i)-\delta(i)\big(R_{m,k}-\sum_{n=1}^{N}\frac{\alpha_{m,n,k}}{4}log(1+\frac{p_{m,n,k}\phi_{m,n,k}}{\alpha_{m,n,k}})\big)\bigg]_\chi
\forall k  \label{eq:update}
\end{eqnarray}
where $\{\delta(i)\}$ is a sequence of scalar step size and $\chi$
denotes the projection onto the feasible set, which contains all
non-negative real numbers. The iterative algorithm terminates when
the difference of two consecutive multipliers is less than a
terminating threshold. The subgradient update is guaranteed to
converge to the optimal multipliers $\lambda_n^{*}, \nu^{*},
\eta_n^*, \mu_k^*$.

We form the Lagrangian of Subproblem 2 as follows
\begin{eqnarray}
L_0&=&\widetilde{G}_0+\sum_{m=1}^M  \sum_{n=1}^N
(1-\epsilon)\beta_{0,n}\widetilde{G}_{n,m}^{*}-\sum_{n=1}^N\lambda_n\left(\sum_{k=1}^{K_0}
\alpha_{0,n,k} -1\right)\nonumber\\
&&-\nu\left(\sum_{n=1}^N \sum_{k=1}^{K_0} p_{0,n,k} -
P_0\right)-\sum_{n=1}^N \eta_n\left(\sum_{k=1}^{K_0}
(1-\beta_{0,n}) \tau^2_{0,n} p_{0,n,k} - I_{0,n}\right)
\end{eqnarray}
\begin{eqnarray}
\nonumber \frac{\partial{L_0}}{\partial{p_{0,n,k}}}=0
&\Rightarrow&
p_{0,n,k}=\alpha_{0,n,k}(\frac{\frac{1}{2}(1-\epsilon)\beta_{0,n}w_{0,k}}{ln2\big(\nu+\eta_{n}\tau^2_{0,n}(1-\beta_{0,n})\big)}-\frac{1}{\phi_{0,n,k}})^{+}\\
\nonumber \frac{\partial{L_0}}{\partial{\alpha_{0,n,k}}}=0
&\Rightarrow&
X_{0,n,k}\triangleq\frac{1}{2}(1-\epsilon)\beta_{0,n}w_{0,k}\bigg(log_2(1+\frac{p_{0,n,k}\phi_{0,n,k}}{\alpha_{0,n,k}})-\frac{p_{0,n,k}\phi_{0,n,k}}{ln2\big(\alpha_{0,n,k}+p_{0,n,k}\phi_{0,n,k}\big)}\bigg)=\lambda_{n}
\end{eqnarray}
\begin{eqnarray}
\nonumber \alpha_{0,n,k} &=& \left\{ \begin{array}{ll} 1, &
X_{0,n,k}=\max_{k} \big\{X_{0,n,k}\big\} >0\\
0, & \textrm{otherwise}
\end{array} \right.
\end{eqnarray}
where $w_{0,k} (k=1,\cdots,M)$ is the derivative on $\widetilde{G}_{m}^{*}$ w.r.t. $r_{0,n,m}$ which can be interpreted
as the equivalent weight of the $m$-th RS. As a result, we can use similar subgradient update procedure as in
(\ref{eq:update}) to obtain the multipliers $\lambda_n(i), \nu(i), \eta_n(i)$. Furthermore, when the data rates for the
relay stations are determined, the packet partition factors $\{d_{m,n,k}\}$ can be determined according to the
structure of $\tilde{G}_m^{**}(\cdot)$. Thus, select the best packet partition factors which can achieve the curve of
$\tilde{G}_m^{**}(\cdot)$.

\section*{Appendix B: Proof of Lemma \ref{lem:obj}}

According to Appendix A, it with probability $1$ that one subchannel is allocated to only one user or relay station in
phase one. Moreover, since channel between the base station and relay station is good enough, one subchannel is sufficient
to carry the data for the phase two transmission. Therefore, it with probability $1$ that one relay is allocated at
most one subchannel. Thus, for any relay station, there is only one positive value in the set of rate allocation
$\{r_{0,n,m}|\forall n\}$. Notice that $\tilde{G}_m^{**}(0)=0,$ we have
\begin{eqnarray}
\tilde{G}_m^{**}(\sum_n r_{0,n,m}t_{n,m}|\hat{\mathbf{S}}_m,\hat{\mathbf{H}}_m) &=& \sum_{n=1}^N
\tilde{G}_m^{**}(r_{0,n,m}t_{n,m}|\hat{\mathbf{S}}_m,\hat{\mathbf{H}}_m) \nonumber\\
& =& \sum_{n=1}^N t_{n,m} \tilde{G}_m^{**}(r_{0,n,m}|\hat{\mathbf{S}}_m,\hat{\mathbf{H}}_m) \quad \forall m. \nonumber
\end{eqnarray}
Hence,
\begin{eqnarray}
\mathbf{E}_{\mathbf{T}_m} \tilde{G}_m^{**}(\sum_n r_{0,n,m}t_{n,m}|\hat{\mathbf{S}}_m,\hat{\mathbf{H}}_m) &=&
\sum_{n=1}^N
 \mathbf{E}_{t_{n,m}} t_{n,m} \tilde{G}_m^{**}(r_{0,n,m}|\hat{\mathbf{S}}_m,\hat{\mathbf{H}}_m) \nonumber \\
&=& \sum_{n=1}^N
 (1-\epsilon) \beta_{0,n} \tilde{G}_m^{**}(r_{0,n,m}|\hat{\mathbf{S}}_m,\hat{\mathbf{H}}_m)
\end{eqnarray}
This complete the proof.

\section*{Appendix C: Proof of Lemma \ref{lem:linear}}

Without loss of generality, we consider the $m$-th cluster. Suppose there are $L$ QoS classes and denote $w_l$ as the
weight of the $l$-th QoS class. Since there is sufficiently large number of users in each cluster, the receiving SNR of
the selected users will be sufficiently large, therefore, equal power allocation is asymptotically optimal. Moreover,
since the relay station is only likely to pick up the best users (with the largest) from each QoS class, and the
channel fading of the best user tends to be a constant (e.g. $\ln K$) when the number of users is sufficiently large,
which subchannel is allocated to which QoS class become independent of the channel fading. Hence, the optimal resource
allocation is to do time-sharing among the $L$ class.

Let $g_{m,l}$ denote the maximum average weighted throughput of the $m$-th cluster if there are sufficient information
bits at the relay and only the users of the $l$-th QoS class are scheduled, and $\{r_{m,n,k}^l\}$ be the rate
allocation leading to the maximum average weighted throughput $g_{m,l}$. We define $r_{m,l}=\sum_{n,k}r_{m,n,k}^l$
denoting the corresponding total transmit data rate. These two parameters can be evaluated by each relay locally. We
first construct a function of $\mathcal{G}_m(r)$ ($\mathcal{G}_m:\mathcal{R}\rightarrow \mathcal{R}$) below (An example
of $\mathcal{G}_m(r)$ is shown in Figure \ref{Fig:linear}):
\begin{itemize}
\item Plot points $\{(r_{m,l},g_{m,l})|\forall l\}$ on a plane.
This refers to the points B and C in Figure \ref{Fig:linear}.

\item Let $\mathcal{H}$ be the convex hull of the points
$\{(r_{m,l},g_{m,l})|\forall l\}$ and $(0,0)$. This refers to the
triangle ABC in Figure \ref{Fig:linear}.

\item Define a region $\widetilde{\mathcal{H}}$ as
$\widetilde{\mathcal{H}} = \{(r,g)| \exists (r,g_e)\in
\mathcal{H}, \  g_e\leq g\}$. This refers to the area bounded by
line ABCD and x-axis in Figure \ref{Fig:linear}. Therefore, for
any given $r$, all the average weighted throughput in the set
$\{g|(r,g)\in \widetilde{\mathcal{H}}\}$ can is
achievable\footnote{An average weighted throughput $g$ is
achievable when there is a joint power, rate and subchannel
allocation at the cluster $m$ such that the average weighted
throughput is equal to $g$.} by the cluster $m$ using TDMA in each
frame.

\item $\mathcal{G}_m(r) = \max \{g|(r,g)\in
\widetilde{\mathcal{H}}\}$. This refers to the line ABCD in Figure
\ref{Fig:linear}.
\end{itemize}
An example of Therefore, $\widetilde{G}_{m}^{**}(r) \geq
\mathcal{G}_m(r)$. Moreover, since
$\frac{\widetilde{G}_{m}^{**}(r)-\mathcal{G}_m(r)}{\mathcal{G}_m(r)}\rightarrow
0$ for sufficiently large number of users in each QoS class, it's
asymptotically optimal to have $\widetilde{G}_{m}^{**}(r) =
\mathcal{G}_m(r)$.

\section*{Appendix D: Proof of Lemma \ref{lem:pfs}}

Without loss of generality, we consider the $m$-th cluster. Since the number of MSs in the $m$-th cluster $K_m$ is
sufficiently large, the sensing measure $\beta_{m,n}\rightarrow S_{m,n}$ and the system is working on the high SNR
regime. Therefore, the throughput gain of power allocation across the subchannels is negligible and we can simply assign
equal power to each available subchannel, thus $p_{m,n}= \frac{\beta_{m,n}}{\sum_{n=1}^N \beta_{m,n}}P_0$.

We first consider the case where $r \geq R_m = \sum_{m=1}^M\log_2\left(1+p_{m,n}l_{m,n,k}\varphi_{m,n,k}\right)$. In
this case, there are sufficient information bits at the relay for phase two transmission. Then the selected MS of the
$n$-th subchannel and the $m$-th cluster is given by $A_{m,n} = \arg\max_{k}
w_{m,k}\log_2\left(1+p_{m,n}l_{m,n,k}\varphi_{m,n,k}\right)$, and $ \widetilde{G}_{m}^{**}(r)= \sum_{n=1}^N
\frac{w_{m,A_{m,n}}\beta_{m,n}(1-\epsilon)}{4}\log_2(1+p_{m,n}l_{m,n,A_{m,n}}\phi_{m,n,A_{m,n}})$.

For the case where $r < R_m$, it's easy to see by linear
interpolation that $\widetilde{G}_{m}^{**}(r) \geq \\
\sum_{n=1}^N \frac{r w_{m,A_{m,n}}\beta_{m,n}(1-\epsilon)}{4R_m}\log_2(1+p_{m,n}l_{m,n,A_{m,n}}\phi_{m,n,A_{m,n}})$.
However, since the BS-RS link is sufficiently good, the BS always delivers $R_m$ bits to the $m$-th relay. Hence, we
can simply let $\widetilde{G}_{m}^{**}(r)=\\ \sum_{n=1}^N \frac{r
w_{m,A_{m,n}}\beta_{m,n}(1-\epsilon)}{4R_m}\log_2(1+p_{m,n}l_{m,n,A_{m,n}}\phi_{m,n,A_{m,n}})$, which does not affect
the scheduling results at the BS.

\section*{Appendix E: Proof of Theorem \ref{thm:aym}}

Due to page limitation, we provide a sketch of proof. When the RS-BS link is sufficiently good due to the existence of
line-of-sight path, the relay will always receive sufficiently information bits as long as there is one available
subchannel in cluster $0$, and the PFS algorithm in each relay cluster works as that in single cell systems with infinite backlog. Hence, we can follow the similar approach as in \cite{Caire:07} to prove that  when $K_c$ is
sufficiently large, the user selection is based on the small-scale channel fading, which leads to (\ref{eqn:asy-user}).

Since there are $N$ subchannels in the system, the probability the BS can not deliver packets to the relays is $q_p^N$.
Hence, in each cluster the probability one subchannel is used to deliver packet is $(1-q_p)(1-q_p^N)$. Again, by
following the similar approach as in \cite{Caire:07}, the $\overline{T}_{m,k}$ can be derived after some algebra.

\bibliographystyle{ieeetr}
\bibliography{MANET,SDMA,Huang,Ray,Lau_05-1017,cuiying-bib}

\begin{figure}
\centering
\includegraphics[height=7cm, width=15cm]{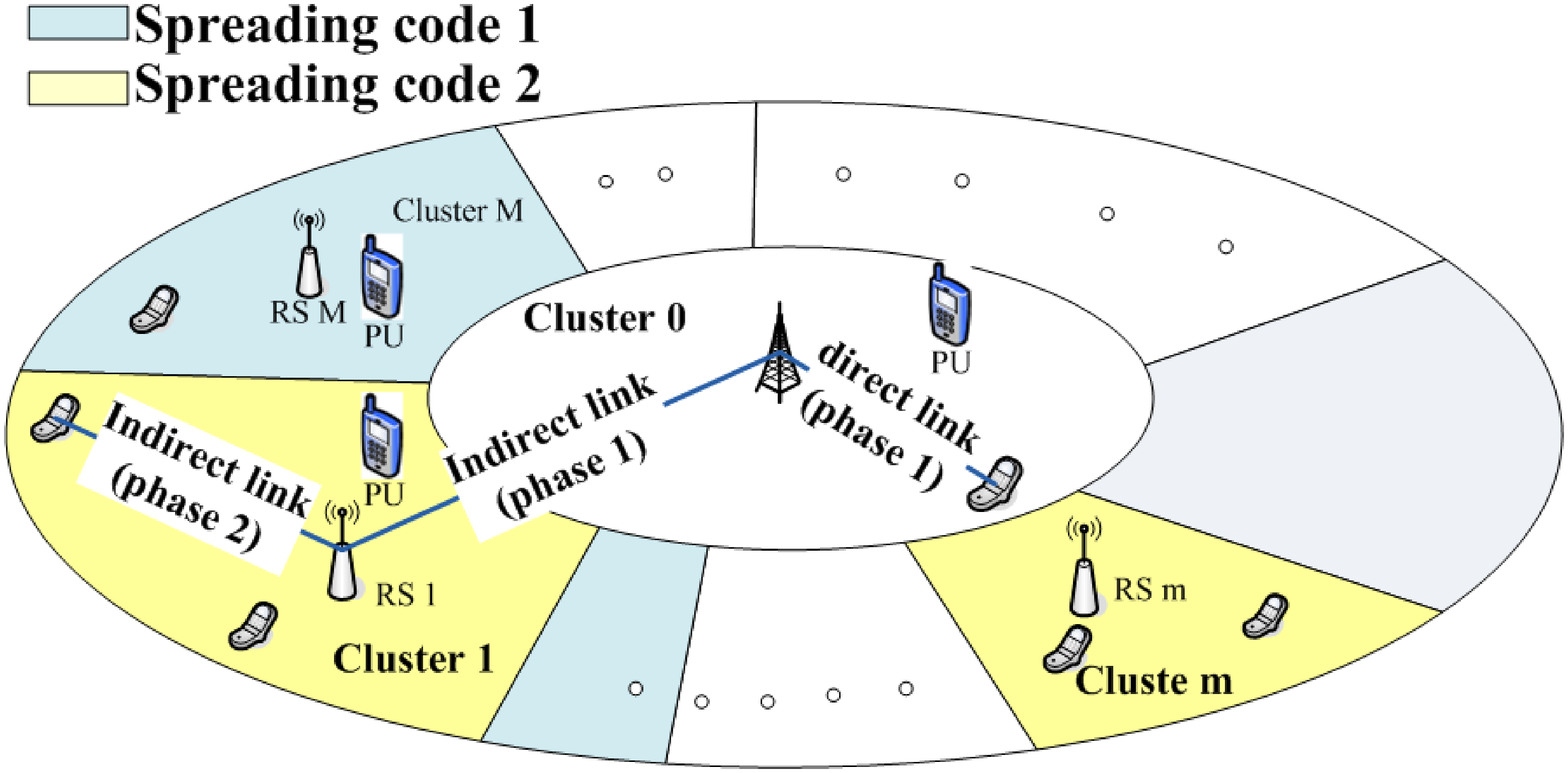}
\caption{Cluster based relay-assisted downlink OFDMA system.}\label{Fig:model}
\end{figure}

\begin{figure}
\centering
\includegraphics[height=5cm, width=15cm]{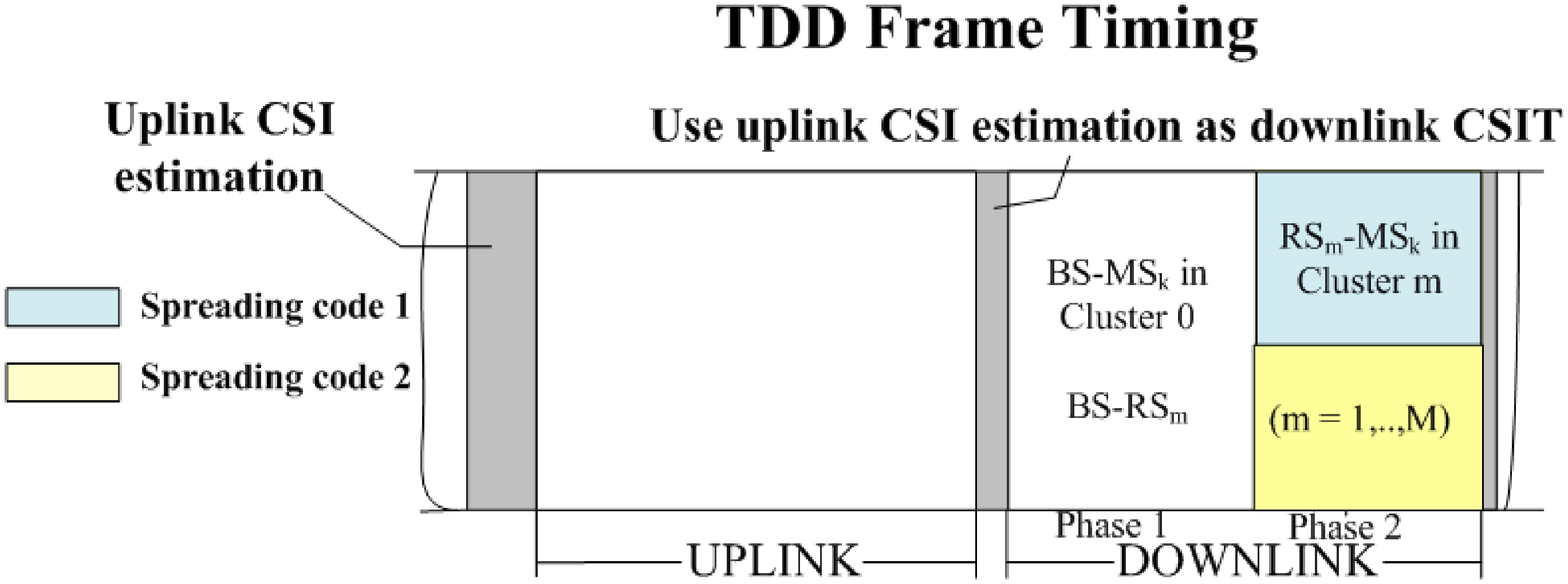}
\caption{A frame structure example for relay-assisted OFDMA system.}\label{Fig:frame}
\end{figure}

\begin{figure}
\centering
\includegraphics[height=10cm, width=12cm]{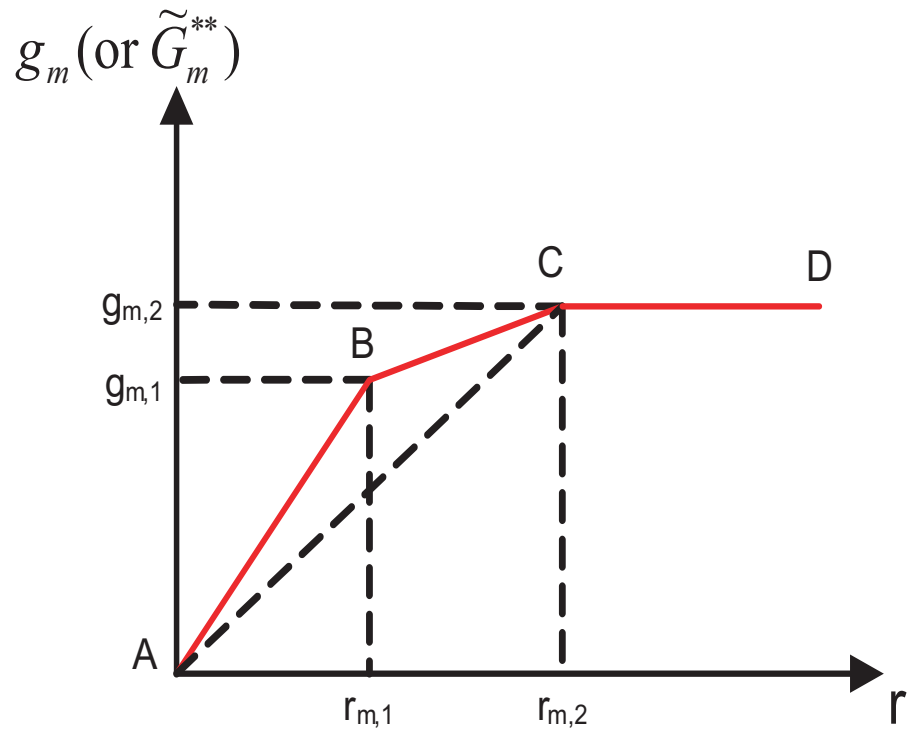}
\caption{An example of function $\widetilde{G}_{m}^{**}$ where
there are two QoS classes with weights $w_1$, $w_2$ and maximum
achievable data rate $r_{m,1}$ ,$r_{m,2}$. The x-axis is the
number of information bits the RS decoded in phase one, and the
y-axis is the maximum average weighted goodput achieved by this
RS. When the number of information bits is less than $r_{m,1}$,
only the first QoS class is scheduled; when it's larger than
$r_{m,1}$ but less than $r_{m,2}$, both two classes are scheduled
by TDMA; and when it's larger than $r_{m,2}$, only the second QoS
class is scheduled.}\label{Fig:linear}
\end{figure}

\begin{figure}
\centering
\includegraphics[height=8cm, width=10cm]{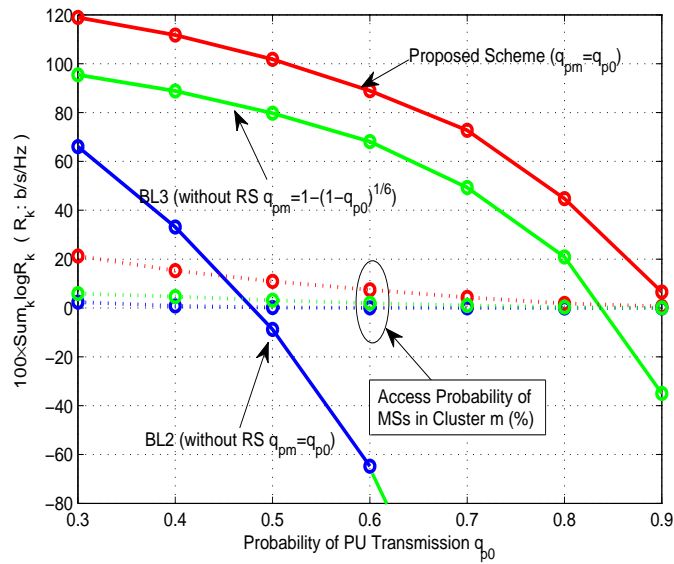}
\caption{\textcolor{black}{$\sum_{k}\log{R_{k}}$ and access
probability of MSs in Cluster m ($m=1,\cdots,M$) versus probability
of PU transmission $q_{p0}$.} $q_f=0.2$, $q_d=0.8$, $M$=6, $N$=4,
$K_{0}$=10, $K_{m}$=5, $I$=0 dB, receive $SNR=10$ dB,
$\sigma_e^2=0.01$. } \label{Fig:qp-new}
\end{figure}

\begin{figure}
\centering
\includegraphics[height=8cm, width=10cm]{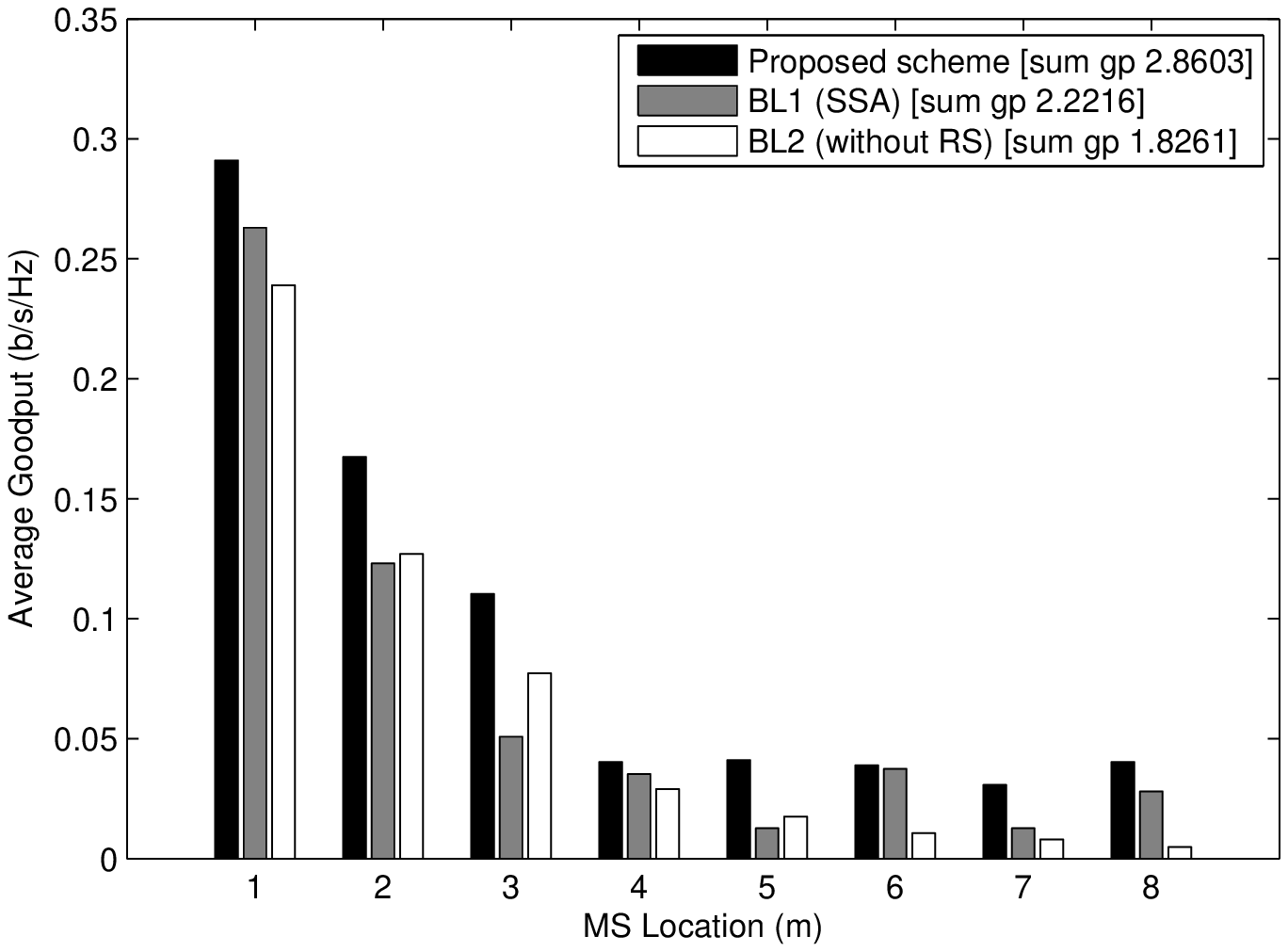}
\caption{\textcolor{black}{Histogram of the average goodput of MSs
(average data rate successfully received at the MSs) at various
distance from the BS.} $q_{p0}=p_{pm}=0.3$, $q_f=0.2$, $q_d=0.8$,
$M$=6, $N$=4, $K_{0}$=10, $K_{m}$=5, $I$=0 dB, receive $SNR=10$ dB,
$\sigma_e^2=0.01$. }\label{Fig:bar-new}
\end{figure}

\begin{figure}
\centering
\includegraphics[height=8cm, width=10cm]{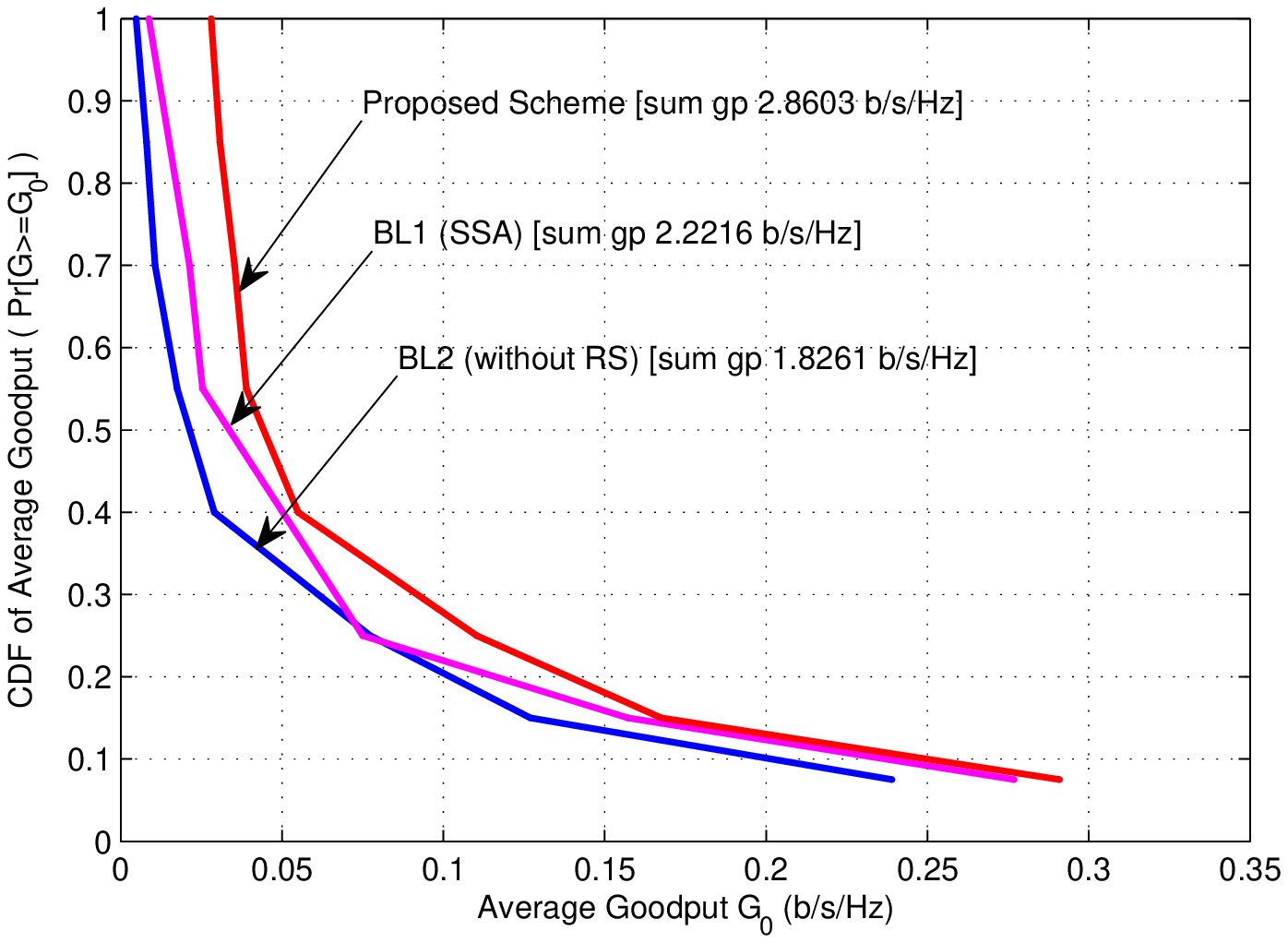}
\caption{\textcolor{black}{CDF of the average goodput of MSs
(average data rate successfully received at the MSs) at various
distance from the BS.}  $q_{p0}=p_{pm}=0.3$, $q_f=0.2$, $q_d=0.8$,
$M$=6, $N$=4, $K_{0}$=10, $K_{m}$=5, $I$=0 dB, receive $SNR=10$ dB,
$\sigma_e^2=0.01$. }\label{Fig:cdf}
\end{figure}

\begin{figure}
\centering
\includegraphics[height=8cm, width=10cm]{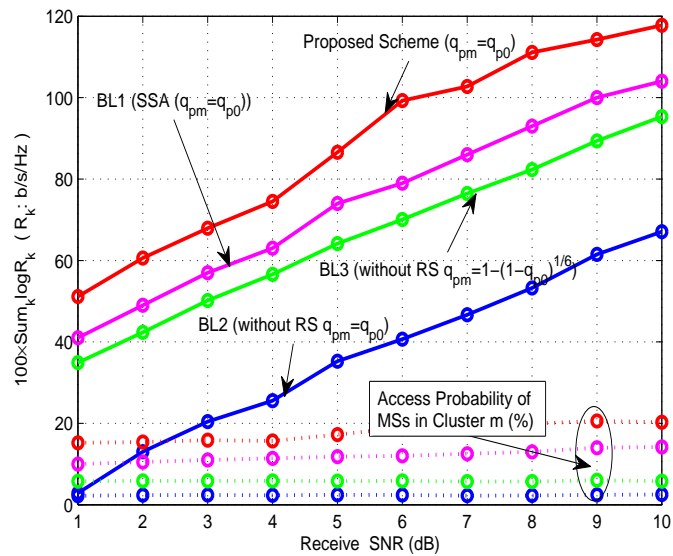}
\caption{\textcolor{black}{$\sum_{k}\log{R_{k}}$ and access
probability of MSs in Cluster m ($m=1,\cdots,M$) versus receive
$SNR$.} $q_{p0}=0.3$,$q_f=0.2$, $q_d=0.8$, $M$=6, $N$=4, $K_{0}$=10,
$K_{m}$=5, $I$=0 dB, $\sigma_e^2=0.01$. } \label{Fig:pwr-new}
\end{figure}

\begin{figure}
\centering
\includegraphics[height=8cm, width=10cm]{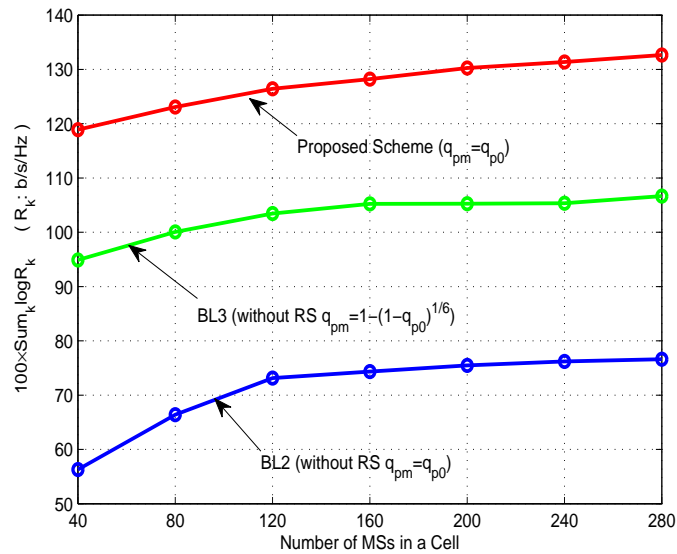}
\caption{\textcolor{black}{$\sum_{k}\log{R_{k}}$ versus the number
of MSs in a cell.} $q_{p0}=0.3$, $q_f=0.2$, $q_d=0.8$, $M$=6, $N$=4,
$I$=0 dB, receive $SNR=10$ dB, $\sigma_e^2=0.01$.}
\label{Fig:largek-new}
\end{figure}

\begin{figure}
\centering
\includegraphics[height=8cm, width=10cm]{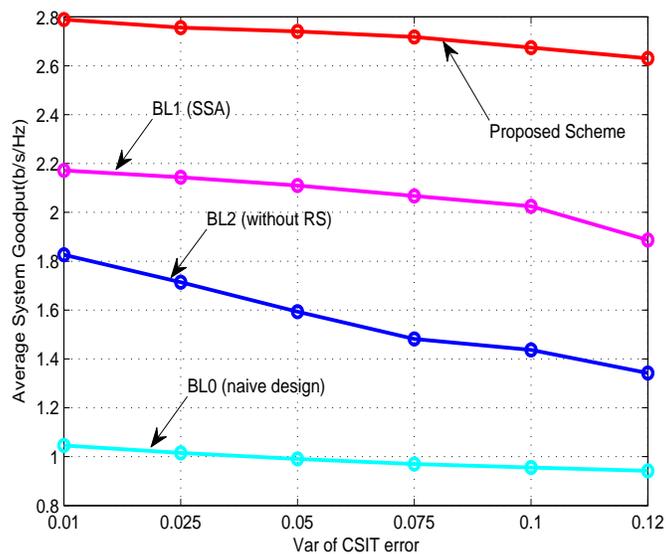}
\caption{\textcolor{black}{Average system goodput  (average data
rate successfully received at the MSs) versus CSIT quality.}
$q_{p0}=p_{pm}=0.3$, $q_f=0.2$, $q_d=0.8$, $M$=6, $N$=4, $K_{0}$=10,
$K_{m}$=5, $I$=0 dB, receive $SNR=10$ dB.} \label{Fig:csiterror-new}
\end{figure}

\end{document}